\documentclass[runningheads]{llncs}
\usepackage[T1]{fontenc}
\usepackage{graphicx}
\usepackage[pdftex]{hyperref}

\usepackage{amssymb} 
\usepackage{graphicx}
\usepackage{enumerate}
\usepackage{xcolor}
\usepackage{comment}

\usepackage{tabularx}
\usepackage{enumitem}
\usepackage{orcidlink}
\usepackage{mathtools}
\usepackage{mathrsfs}
\usepackage{caption}
\usepackage{subcaption}
\usepackage{comment}
\usepackage{complexity}
\usepackage{xargs} 
\usepackage{algorithm}
\usepackage{algorithmic}
\definecolor{MyBlue}{RGB}{50,100,200}
\usepackage{booktabs}
\usepackage{multirow}
\usepackage{pgfplots}
\pgfplotsset{compat=1.17}
\usepackage{thmtools}
\usepackage{thm-restate}
\usepackage{listings}

\hypersetup{%
  colorlinks=true,
  linkcolor=MyBlue,
  citecolor=MyBlue,
  urlcolor=MyBlue
}
\usepackage[nameinlink,capitalise]{cleveref}

\usetikzlibrary{automata, positioning, fit, calc, arrows.meta, patterns}
\usepackage[textsize=tiny]{todonotes}

\usepackage{lineno}

\newcommand{\llbracket}{[\![}
\newcommand{\rrbracket}{]\!]}

\DeclareMathOperator{\Share}{Share}
\DeclareMathOperator{\Reconstruct}{Reconstruct}
\DeclareMathOperator{\Convert}{Convert}

\newcommand{\Mc}{\mathcal{M}}
\newcommand{\Sc}{\mathcal{S}}

\usepackage{todonotes}
\newcommandx{\theju}[2][1=]{\todo[linecolor=blue,backgroundcolor=blue!25,bordercolor=blue,#1]{\tiny T: #2}}
\usepackage{makecell}
\lstdefinelanguage{Python}{
    keywords={def, for, in, range, return, from, import},
    keywordstyle=\color{blue},
    commentstyle=\color{gray},
    stringstyle=\color{orange},
    basicstyle=\ttfamily\small,
    morecomment=[l]{\#},
    morestring=[b]"
}

\begin{document}
\title{Distributed Privacy-Preserving Monitoring}
%
%
\author{Mahyar Karimi\inst{1}\and
K. S. Thejaswini\inst{2}\and
Roderick~Bloem\inst{3}\and\\
Thomas A.~Henzinger\inst{1}}

\authorrunning{M. Karimi et al.}
%
\institute{ISTA, Klosterneuberg, Austria\\
\email{\{mahyar.karimi,tah\}@ista.ac.at}\\
\and 
Université Libre de Bruxelles, Belgium\\
\email{thejaswini.raghavan@ulb.be}\\
\and TU Graz, Austria\\
\email{roderick.bloem@tugraz.at}}
\maketitle              
\begin{abstract}
In traditional runtime verification, a system is
typically observed by a monolithic monitor. Enforcing
privacy in such settings is computationally expensive,
as it necessitates heavy cryptographic primitives. 
Therefore, privacy-preserving monitoring remains  
impractical for real-time applications.

In this work, we address this scalability challenge by
distributing the monitor across multiple parties---at least one of which is honest. This
architecture enables the use of efficient secret-sharing
schemes instead of computationally intensive 
cryptography, dramatically reducing overhead while 
maintaining strong privacy guarantees. While existing 
secret-sharing approaches are typically limited to one-shot executions which do not maintain an internal state, we introduce a protocol tailored for
continuous monitoring that supports repeated evaluations
over an evolving internal state (kept secret from the system and the monitoring entities). We implement our 
approach using the MP-SPDZ framework. Our experiments 
demonstrate that, under these architectural assumptions, 
our protocol is significantly more scalable than 
existing alternatives. 

\keywords{Privacy-preserving runtime verification  \and Secret sharing.}
\end{abstract}
%
%
%
\section{Introduction}
A lot of modern-day software is proprietary. This could include software that is used in healthcare, banks, metro systems, bioinformatic software, malware detection systems, and even verification software. This makes the use of third-party verification a challenge since both the verification software and the data that is being verified in the software need privacy on both ends. Privacy is compromised if protected information about the software, or sensitive data that is processed by it, is revealed. In addition, revealing the specification being verified may undermine the essence of third-party verification. 
Consider for example a bank proving regulatory compliance to auditors without revealing trading algorithms,
or a medical device manufacturer seeking safety certification while protecting control logic.
These highlight a fundamental challenge: verifying compliance with formal specifications while preserving confidentiality of sensitive information.

Unlike static verification which requires the entire model description to determine specification compliance, runtime verification monitors only the specific outputs of a running system. By avoiding the need to analyse the full system state, runtime monitoring offers a more scalable approach to privacy-preserving verification~\cite{LAHPTW22,LJAPW22,LWSTRP24,LJAPW22}. Privacy requirements in most settings of runtime verification are bidirectional. The System contains sensitive data (customer information, trade secrets)
that must remain confidential from the verifier,
while revealing $\varphi$ may undermine independent verification.

\paragraph{Related work.}
Recent approaches to privacy-preserving monitoring largely rely on cryptographic protocols to secure the interaction between a System, which generates a trace of observable outputs, and a Monitor, which holds a specification (e.g., a DFA or STL formula). All of the works consider the privacy-preserving setup where the Monitor does not learn the System's trace, while  the System also does not learn the specification.

Banno et al.~\cite{BMMBWS22} introduce privacy-preserving monitoring using Fully Homomorphic Encryption (FHE), enabling the Monitor to verify Linear Temporal Logic (LTL) specifications privately. Waga et al.~\cite{WMSMBBS24} subsequently extend this approach to Signal Temporal Logic (STL). In these FHE-based frameworks, due to the nature of FHE protocols, the result of the monitoring is revealed solely to the System and cannot be verified by the Monitor. 
Furthermore, Banno et al. and Waga et al. provide security against malicious adversaries, whereas most other works discussed below assume semi-honest (honest-but-curious) adversaries.

Alternatively, Henzinger, Karimi, and Thejaswini~\cite{HKT25} employ Garbled Circuits (GC) and Private Function Evaluation. A key distinction from the FHE-based works is that in this setting, whether the trace is satisfied by the System or not is revealed only to the Monitor. In more recent independent work, Koll, Hang, Rosulek, and Abbas~\cite{koll2025monitoring} also adapt GC protocols to monitor STL specifications for cyber-physical systems. However, Koll et al. require the trace length to be fixed a priori, resulting in linear increase of time required with the pre-determined bound on the trace. In contrast, Henzinger et al. support variable trace lengths that can be determined at runtime by providing a new modification of garbled circuits. 

While these solutions offer strong privacy guarantees, they impose substantial computational costs: the monitoring latency  often requires tens of minutes  to verify even simple specifications due to the complexity of the underlying cryptographic protocols, making them difficult to use for real-time monitoring.

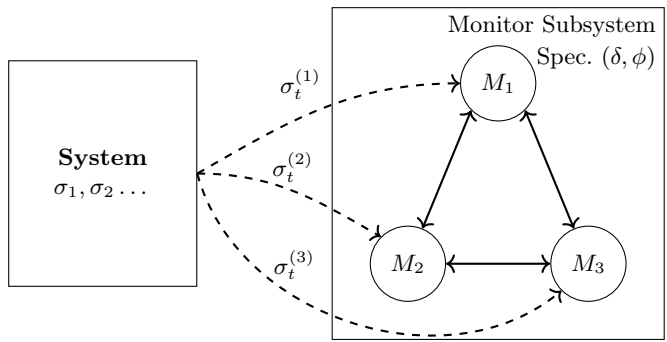
\begin{figure}
    \centering
    \begin{tikzpicture}[
    node distance=0.5cm,
    system/.style={
        draw, rectangle, 
        minimum width=2.5cm, minimum height=3cm, 
        align=center, font=\bfseries
    },
    monitor/.style={
        draw, circle, 
        minimum width=1cm, 
        align=center, font=\small
    },
    container/.style={
        draw, rectangle, 
        inner sep=0.5cm, 
        label={[anchor=north east]north east:Monitor Subsystem},
        label={[anchor=north east,yshift=-0.4cm]north east:Spec. $(\delta,\phi)$}
    }
]

    \node[system] (sys) {System\\$\sigma_1,\sigma_2\dots$};

    \coordinate[right=4cm of sys] (center_point);

    \node[monitor] (m1) at ($(center_point)+(0, 1.2)$) {$M_1$};
    
    \node[monitor] (m2) at ($(center_point)+(-1.2, -1.2)$) {$M_2$};
    
    \node[monitor] (m3) at ($(center_point)+(1.2, -1.2)$) {$M_3$};

    \node[container, fit=(m1) (m2) (m3)] (bigbox) {};

    
    \draw[->, thick,dashed] (sys.east) to[out=30, in=180] node [above,xshift=-0.3cm] {$\sigma_t^{(1)}$}  (m1.west);
    \draw[<->, thick] (m1.south east) to (m3);

    \draw[->, thick, dashed] (sys.east) to[out=0, in=150] node [above] {$\sigma_t^{(2)}$}(m2.north west);
    \draw[<->, thick] (m2) to (m1.south west);

    \draw[->, thick, dashed] (sys.east) to[out=-75, in=220] node [above,xshift=-0.75cm,yshift=0.45cm] {$\sigma_t^{(3)}$}(m3.south west);
    \draw[<->, thick] (m3.west) to (m2.east);

\end{tikzpicture}
    \caption{Monitoring entity that is distributed across several monitors.}
    \label{fig:placeholder}
\end{figure}

\paragraph{Our approach.} We address the 
problem of privacy-preserving monitoring where
a System produces observable outputs (e.g., a sequence of transactions of clients of a bank) verified by a Monitor against a specification. We consider a general setting where the specification may be temporal and depend on an evolving internal state (e.g., the bank balances or transactions of specific client).

We enforce bidirectional privacy: the Monitor learns only whether the specification is satisfied or violated, remaining oblivious to the System's output trace and the internal specification state; conversely, the System learns only the evaluation result, gaining no knowledge of the specification itself.

Unlike previous works that rely on heavy cryptographic primitives, we exploit 
the underlying architecture to achieve privacy and efficiency. We model the Monitor not as a 
monolithic entity, but as a set of distributed parties, where at least one computer is honest. This allows us to leverage \emph{secret 
sharing}, which we explain below, to monitor secret specs on private data. While standard secret-sharing schemes are typically 
restricted to stateless, one-shot functions, we consider a \emph{stateful} protocol for monitoring. This allows for repeated
evaluations of the specification while maintaining a persistent, private internal state.

\paragraph{A tale of two sharings.}
Imagine the System holds a secret byte $\sigma \in \{0,\dots,255\}$ and wants to distribute it among three Monitor computers. 
It simply slices $\sigma$ into three random pieces: it picks $\sigma^{(1)}, \sigma^{(2)}$ at random, computes $\sigma^{(3)}$ such that $\sigma^{(1)} + \sigma^{(2)} + \sigma^{(3)} \equiv \sigma \pmod{256}$, and hands one piece to each computer.
Individually, a share looks like pure noise; even if two computers conspire, they lack the third piece of the puzzle. The secret $\sigma$ exists only when all three combine their shares.
If the System produces a stream of outputs $(x_t, y_t)$, and the Monitor needs to track a weighted running sum like $\sum (5x_t + 8y_t)$, the computers can simply perform the computation locally on their private shares. 
As long as one computer is honest, the individual values remain hidden, and all three monitors can only learn the weighted sum and never the individual values.

Suppose the specification requires multiplying values ($x \cdot y$). 
Here, the additive approach fails, and we turn to the \emph{polynomial} sharing by Shamir~\cite{shamir1979share}.
The System hides secrets $x$ and $y$ by selecting two random degree-2 polynomials, $p$ and $q$ such that $p(0)=x$ and $q(0)=y$.  
It gives each computer a single point on each curve: computer $i$ gets $(i, p(i))$ and $(i, q(i))$.
Since three points uniquely define a parabola, the three computers, but no fewer, can jointly recover the secrets.

The problem arises when they try to multiply. 
Each computer can locally multiply its $y$-coordinates to find a point on the product curve $r(z) = p(z) \cdot q(z)$. 
However, multiplying two quadratic curves produces a \emph{quartic} curve (degree 4). 
A degree-4 curve needs five points to be defined, but we only have three computers! 
The local shares are no longer sufficient to reconstruct the result. 
Resolving this requires additional interactive steps (such as degree 
reduction) which are standard. 



Real-world specifications are rarely just sums or just products.
Verifying a condition like $\sum (x_t^2 + y_t) < 100$ requires polynomial sharing for the product and sum, but Boolean sharing for the comparison. 
While specialised schemes exist for each domain, they 
rely on incompatible data representations: one cannot simply feed an arithmetic 
share directly into a Boolean circuit. 

Modern MPC frameworks like MP-SPDZ~\cite{mpspdz2024} provide mechanisms to 
translate between these representations, but they typically target finite, one-shot 
computations. They do not support the ``infinite,'' non-terminating nature 
of runtime monitoring, where a persistent secret state must be maintained and 
evolved indefinitely across a stream of events.

\paragraph{Our protocol.} 
In our protocol, the System's output trace is secret-shared across the Monitor parties. We depict this pictorially in \cref{fig:placeholder}. 
Post-initialisation, the System and Monitor exchange only a single message per observation step to evaluate the specification, 
thus eliminating expensive cryptographic operations. This yields orders-of-magnitude 
speed-ups compared to FHE or garbled circuits. The key idea is that although the whole monitoring sub-system ``knows'' the secret shared version of the System's observable output, no strict subset of the computers in the Monitor subsystem knows the System's observable output. 

To be able to use different share formats and perform various Boolean and arithmetic operations, our protocol combines multiple secret-sharing techniques.
We utilise share conversion protocols that allow the monitor parties to 
switch between representations: for example, converting arithmetic shares (efficient for summing) into Boolean shares (efficient for comparisons), 
without ever revealing the underlying values. Importantly, this allows 
us to maintain a persistent, private execution state across repeated evaluations, supporting complex specifications that require a mix of 
arithmetic and logical operations. This enables us to monitor much more than invariants but also stateful protocols.

This performance relies on specific assumptions: we require an honest
majority among the distributed Monitor parties. 
We prove that under these constraints, the protocol maintains strong information-theoretic 
privacy: 
no individual party (or dishonest coalition) learns the secret data, and entities learn 
nothing beyond the sequence of specification violations. 

In addition to our protocol, we also provide a
framework for MPC from linear secret-sharing schemes, which can be of independent interest. While Cramer, Damg{\aa}rd, and Maurer~\cite{cramer2000general} also have a similar abstraction, ours
generalises beyond the linear case to a broader class of
sharing systems, which  enables us to reason about protocols that switch
between different sharing representations. 

We evaluate our approach on four diverse monitoring scenarios: access control systems, distributed lock management, blood sugar monitoring, and geofencing. these case studies represent \emph{stateful} temporal specifications 
rather than simple stateless invariants. We adopt the first two benchmarks from 
Henzinger et al.~\cite{HKT25} and the third from Banno et al.~\cite{BMMBWS22}. 
Additionally, we introduce a new geofencing benchmark to demonstrate our 
method's capacity to efficiently handle non-linear operations.
Our implementation achieves several orders of magnitude faster performance than prior cryptographic approaches, with most scenarios completing per-iteration monitoring in well under a second.
This enables practical real-time monitoring while maintaining information-theoretic privacy guarantees.
\section{Problem definition}
\label{sec:definitions}

We formally define the problem and discuss the notation that we use in this paper. 
There are two \emph{entities}, namely the \emph{System}, sometimes referred to as $\Sc$ and the \emph{Monitor} $\Mc$.
The Monitor entity consists of at least two computers, while the System is considered monolithic, and consists of one computer. 
We write  $M_i$ for the $i^\text{th}$ computer in the Monitor entity.
For every pair $(C,D)$ of computers, irrespective of the entity they belong to,
there exists at least one private communication channel between $C$ and~$D$.

\paragraph{System.}
The System is a finite-state machine whose state $\sigma$ is an $n$-bit
binary string.
The System generates the infinite sequence of states
$  \sigma_1, \sigma_2, \dots \in \{0,1\}^n $, with $\sigma_t$ generated at time $t$
This infinite sequence produced by the System is also the \emph{observable output}. 

\paragraph{Monitor.}
The Monitor takes as input a specification in the form of a finite-state
machine with state space $\{0,1\}^m$ and with two functions:
a next-state function $\delta$ and a flag function $\phi$.
All computers in the Monitor entity know $\delta$ and $\phi$.

\begin{itemize}
  \item[--] At time point $t$, the next-state function $\delta$ takes inputs
  $(\mu_t, \sigma_t)$ analogous to those of $\phi$, and outputs the next
  specification state $\mu_{t+1} = \delta(\mu_t, \sigma_t)$.
  \item[--] The flag function $\phi(\mu, \sigma)$ takes as input the current
  \emph{specification state} $\mu \in \{0,1\}^m$ and a System output $\sigma$,
  and returns a Boolean value.
  We write $\phi(\mu, \sigma) = \top$ if and only if the pair $(\mu, \sigma)$ is an
  \emph{undesired input}.
\end{itemize}

\paragraph{Protocols.}
A \emph{protocol} $P$ is a collection of communicating sequential processes.
For each computer $X$ (whether in the System or the Monitor), the protocol
assigns a non-empty process $p(X)$ that runs on $X$.
All computers execute their assigned processes throughout the protocol.

\paragraph{Correctness.}
Protocol $P$ is \emph{correct} if and only if,
at every time point $t$,
the computers of the Monitor entity can collectively obtain the correct values
of $\mu_{t+1}$ and $\phi(\mu_t, \sigma_t)$.
This does not require that any individual computer in the entity must hold
the correct value; correctness only requires that the entity can compute
these values by pooling the information locally available to its computers.
If protocol $P$ is correct, then it terminates at the first time point $t$ at
which $\phi(\mu_t, \sigma_t) = \top$.
\paragraph{Privacy.}
A protocol $P$ is \emph{privacy-preserving} if and only if, at every time step $t$, 
each entity (the System and the Monitor) learns \emph{only} the evaluation result 
$\phi(\mu_t, \sigma_t)$, and nothing regarding the internal state or transitions 
of the other entity beyond what is logically implied by this output. 

In particular, the System must not gain information about the Monitor's state 
$\mu_t$ or transition function $\delta$, and the Monitor must not learn the 
System's output $\sigma_t$ or its transition logic beyond the evaluation result. 
Formally, privacy is defined via \emph{indistinguishability}: for any adversary 
controlling one entity, the view of the protocol execution concerning the other 
entity's secrets is \emph{indistinguishable} from a sequence of 
uniformly random values~\cite{Lin17}.

\paragraph{Repeated executions.} A protocol $P$ satisfies \emph{feasible repeated execution} if  protocol supports repeated execution over the evolving sequence 
    $(\sigma_t)_{t\ge1}$ where each execution is not computationally expensive. 

\paragraph{Assumptions.}

We assume that communication between any two computers is private:
for every pair $(P,Q)$ of computers, $P$ can send a
message to $Q$ such that no computer other than $P$ and $Q$ learns anything about the message.

\paragraph{Security model.}
We work in the \emph{semi-honest} corruption model.
All computers, including corrupted ones, follow the protocol exactly as
specified, but corrupted computers may pool their internal memories in an
attempt to infer additional information.
Security in this model guarantees that such corrupted sets of computers learn
nothing beyond what is already revealed from their own inputs and outputs,
and in particular no unintended information about the internal states of the
other entity is leaked.

\subsubsection{Problem statement.}

Given the System and Monitor entities, their internal state machines, and the
observable outputs $\sigma_t$ as defined in the previous section, our goal is to
design a protocol $P$ that allows these two mutually distrusting entities
to jointly evaluate the specification without revealing any additional
internal information.

At each time point $t$, the Monitor must obtain the correct values of
$\mu_{t+1} = \delta(\mu_t, \sigma_t)$ and the flag $\phi(\mu_t, \sigma_t)$, while the
System and Monitor must not learn anything beyond the value of the flag itself.
Both entities consist of multiple computers operating in the semi-honest
model of adversarial behaviour.

We seek a protocol $P$ that satisfies correctness, privacy and satisfies feasible repeated execution. 




\section{Secret sharing}

The core challenge in privacy-preserving monitoring is to enable verification
without revealing the underlying data.
Secret sharing provides a solution: instead of storing a value in one location,
we distribute it across multiple computers in such a way that no individual
computer learns the value, yet all computers together can reconstruct it and
compute functions on it.

\paragraph{Computing on shares.}
The real power of secret sharing is that we can compute on shared values
\emph{without reconstructing them}.
For example, suppose each computer holds a share of two secret bits $a$ and $b$,
written $\llbracket a \rrbracket$ and $\llbracket b \rrbracket$.
To compute the XOR $c = a \oplus b$ while keeping everything secret,
each computer~$i$ simply computes $c^{(i)} = a^{(i)} \oplus b^{(i)}$ locally.
Since
\begin{align*}
  c^{(1)} \oplus c^{(2)} \oplus c^{(3)} 
  &= (a^{(1)} \oplus b^{(1)}) \oplus (a^{(2)} \oplus b^{(2)}) \oplus (a^{(3)} \oplus b^{(3)}) \\
  &= (a^{(1)} \oplus a^{(2)} \oplus a^{(3)}) \oplus (b^{(1)} \oplus b^{(2)} \oplus b^{(3)}) 
  = a \oplus b = c,
\end{align*}
the computers now hold a valid sharing of $c$ without ever learning $a$, $b$, or $c$.

Computing AND of some bits is more complex: the computers must interact and use additional
randomness to prevent leakage.
This asymmetry---some operations are local, others require interaction---is
fundamental to MPC and drives the need for mixed-protocol computation.

We formulate our protocol in terms of an abstract notion of secret sharing.
This allows us to describe correctness and privacy independently of any
concrete MPC backend or representation~\cite{goldreich1998secure,cramer2000general}.

Our model is organized as follows.
We begin by defining abstract sharing systems (\cref{sec:abstract-sharing}),
which capture the essential properties any secret-sharing scheme must satisfy.
We then introduce share conversion (\cref{sec:share-conversion}) as a
first-class primitive that enables secure interoperability between different
sharing representations.
Finally, we present concrete instantiations (\cref{sec:instantiations})
and establish efficiency criteria for share conversions.

\label{rem:relation-to-prior}
We extend existing frameworks of  Cramer, Damg{\aa}rd, and Maurer~\cite{cramer2000general} 
who provide a
framework for MPC from linear secret-sharing schemes to a broader class of
sharing systems. 
Crucially, we explicitly express share conversion as a primitive
operation with formal correctness and privacy requirements
(Definition~\ref{def:share-conversion}). Treating heterogeneous MPC
backends uniformly under a single interface helps 
us reason about protocols that switch
between different sharing representations.

\subsection{Abstract Sharing Systems}
\label{sec:abstract-sharing}

An abstract sharing system formalises the interface between a secret value
and its distributed representation.
The key idea is to decompose a value $v$ into shares distributed among $k$
computers such that no proper subset of computers learns anything about $v$,
yet all computers together can reconstruct $v$ and compute functions on it.

\begin{definition}[Abstract Sharing System]
\label{def:sharing-system}
An \emph{abstract sharing system} for $k$ computers is a tuple
$\mathcal{S} = (\mathcal{V}, \mathcal{D}, \Share, \Reconstruct, \mathcal{O})$,
where:
\begin{itemize}
  \item $\mathcal{V}$ is a set of secret values.
  \item $\mathcal{D}$ is a domain of shares.
  \item $\Share : \mathcal{V} \to \mathcal{D}^k$ is a (possibly randomised)
        sharing function producing one share per computer.
  \item $\Reconstruct : \mathcal{D}^k \to \mathcal{V}$ is a deterministic
        reconstruction function.
  \item $\mathcal{O}$ is a set of supported operations on shared values,
        where each $f \in \mathcal{O}$ satisfies
        $f \in (\mathcal{V}^r \to \mathcal{V})$ for some $r \in \mathbb{N}$.
\end{itemize}
\end{definition}

Each computer $i \in \{1,\dots,k\}$ holds the $i$th component of $\Share(v)$,
denoted $\Share(v)_i \in \mathcal{D}$.
The sharing function may be probabilistic, so $\Share(v)$ denotes a random
variable over $\mathcal{D}^k$.

The sharing system must satisfy the following properties.

\begin{property}[Correctness]
\label{prop:correctness}
For all $v \in \mathcal{V}$,
$
  \Reconstruct(\Share(v)) = v
$
holds with probability $1$ over the randomness of $\Share$.
\end{property}

\begin{property}[Privacy]
\label{prop:privacy}
For any strict subset $I \subset \{1,\dots,k\}$ of computers,
the joint distribution of shares $\{\Share(v)_i\}_{i \in I}$
is independent of $v$.
Formally, for all $v, v' \in \mathcal{V}$,
$
  \{\Share(v)_i\}_{i \in I} \equiv \{\Share(v')_i\}_{i \in I},
$
where $\equiv$ denotes equality in distribution.
\end{property}

This property ensures perfect privacy: even a coalition of up to $k-1$ computers
learns nothing about the secret beyond what they already knew.

\begin{property}[Closure under Operations]
\label{prop:closure}
For every operation $f \in \mathcal{O}$ of arity $r$,
there exists a protocol $\Pi_f$ that, given sharings
$\llbracket v_1 \rrbracket, \dots, \llbracket v_r \rrbracket$
of values $v_1,\dots,v_r \in \mathcal{V}$,
produces a sharing $\llbracket f(v_1,\dots,v_r) \rrbracket$
without revealing any intermediate values to any computer.
\end{property}

The protocol $\Pi_f$ may involve interaction and randomness, but must
preserve the privacy property: any strict subset of computers learns nothing
about $v_1,\dots,v_r$ or $f(v_1,\dots,v_r)$. 

\paragraph{Notation.}
We write $\llbracket v \rrbracket_{\mathcal{S}}$
to denote a sharing of $v$ under system $\mathcal{S}$,
and omit the subscript when the context is clear.
When we write $\llbracket v \rrbracket$, we mean a tuple of shares
$(s_1,\dots,s_k) \in \mathcal{D}^k$ such that
$\Reconstruct(s_1,\dots,s_k) = v$.

\subsection{Share conversion}
\label{sec:share-conversion}

In practice, different parts of a computation may require different sharing representations.
For example, comparison operations are often more efficient in Boolean form, while arithmetic operations are natural in additive or polynomial sharings.
Share conversion allows the protocol to switch between representations without reconstructing the underlying secret. 
Let
$$
  \mathcal{S}_1 
  = (\mathcal{V}_1,\mathcal{D}_1,\Share_1,\Reconstruct_1,\mathcal{O}_1), \quad
  \mathcal{S}_2 = (\mathcal{V}_2,\mathcal{D}_2,\Share_2,\Reconstruct_2,\mathcal{O}_2)
  $$
be two sharing systems.

\begin{definition}[Share conversion]
\label{def:share-conversion}
A \emph{share conversion} from $\mathcal{S}_1$ to $\mathcal{S}_2$ is a protocol
$
  \Convert_{\mathcal{S}_1 \to \mathcal{S}_2}
  :
  \mathcal{D}_1^k \to \mathcal{D}_2^k
$
such that for every $v \in \mathcal{V}_1 \cap \mathcal{V}_2$:
\begin{enumerate}
  \item \emph{Correctness:}
        reconstructing after conversion yields the same value:\\
        $
          \Reconstruct_2(\Convert(\Share_1(v))) = v
        $
        with probability $1$ over the randomness of $\Share_1$ and $\Convert$.
  \item \emph{Privacy:}
        the protocol reveals no information about $v$
        beyond what is implied by the output sharing.
        Formally, for any strict subset $I \subset \{1,\dots,k\}$ and
        any $v,v' \in \mathcal{V}_1 \cap \mathcal{V}_2$,
        the view of computers $I$ during the execution of $\Convert$
        on input $\Share_1(v)$ is indistinguishable from their view
        on input $\Share_1(v')$.
\end{enumerate}
\end{definition}

We treat share conversion as a first-class primitive.
Its internal realization may involve interaction, randomness,
or auxiliary shared values, but these details are abstracted away.

\begin{remark}[Compositionality]
\label{rem:conversion-composition}
Share conversions compose naturally.
That is, if
$\Convert_{\mathcal{S}_1 \to \mathcal{S}_2}$ and
$\Convert_{\mathcal{S}_2 \to \mathcal{S}_3}$
are share conversions, then their composition
$
  \Convert_{\mathcal{S}_2 \to \mathcal{S}_3} \circ
  \Convert_{\mathcal{S}_1 \to \mathcal{S}_2}
$
is a share conversion from $\mathcal{S}_1$ to $\mathcal{S}_3$.
This follows immediately from the correctness and privacy properties.
\end{remark}

In \cref{app:SharingSystemsConcrete}, we provide concrete
instantiations of the abstract sharing systems using standard MPC schemes:
\emph{additive sharing} over rings $\mathbb{Z}_{2^k}$ for arithmetic computations,
\emph{polynomial (Shamir) sharing} for information-theoretic security in threshold settings, and
\emph{Boolean sharing} for logical predicates.

To support non-linear operations like comparisons ($x < y$) and equality checks,
we use \emph{efficient share conversion} protocols, most notably \emph{bit decomposition},
which translates arithmetic shares into Boolean shares of individual bits.
While linear operations are local and efficient, non-linear operations require interaction:
multiplications utilise precomputed \emph{Beaver triples} to achieve constant-round complexity,
while comparisons rely on conversion protocols followed by optimized Boolean circuits.

\paragraph{Mixed-protocol computation.}
Modern MPC frameworks exploit the efficiency differences between representations:
arithmetic operations (addition, multiplication) are performed on additive or Shamir shares
in $O(1)$ rounds,
Boolean operations (AND, OR, XOR) on Boolean shares in $O(1)$ rounds,
and conversions invoked only when crossing representation boundaries~\cite{demmler2015aby,mohassel2018aby3}.
By carefully choosing representations to minimise conversion overhead,
mixed-protocol computation achieves significantly better concrete efficiency
than using a single representation throughout~\cite{mohassel2018aby3}.

We make the following assumption.
All share conversions required by the protocol admit efficient
realizations as defined in Definition~\ref{def:efficient-conversion}.

\section{Distributed protocol for monitoring}

Having established the abstract sharing framework, we now describe how it
is used to construct a privacy-preserving monitoring protocol.
The key challenge is to enable the Monitor to evaluate the specification
on the System's outputs without learning the outputs themselves,
while ensuring that the System cannot predict or influence what the Monitor learns.

We organize this section as follows.
First, we describe the setup phase and state representation (\cref{sec:setup}).
Then we detail the round-by-round execution (\cref{sec:round-execution})
and formalize the properties that ensure privacy and correctness
(\cref{sec:invariants}).


\subsection{Setup and state representation}
\label{sec:setup}

Before execution begins, the System and Monitor establish a common framework
for secret sharing and agree on how the specification will be represented.

\paragraph{Shared parameters.}
The System and Monitor agree on a fixed sharing system
$\mathcal{S} = (\mathcal{V}, \mathcal{D}, \Share, \Reconstruct, \mathcal{O})$.
This agreement includes concrete definitions
for each element in the tuple; e.g.,
integers modulo $2^{64}$ or bits as the value domain $\mathcal{V}$, or
arithmetic or logical operations or comparisons
as the set of supported operations $\mathcal{O}$.

All sharings exchanged during the protocol are instances of $\mathcal{S}$.
This ensures that values shared by the System can be directly consumed by
the Monitor in its local computations without additional conversions between
incompatible representations.
Note that the Monitor computers can use other agreed-upon sharing systems
among themselves as well, to perform share conversions.
\paragraph{Persistent shared state.}
The Monitor entity maintains the specification state $\mu_t$ in shared form
throughout execution.
At every time point $t$, the following invariant holds:
\begin{quote}
  The Monitor computers jointly hold a sharing
  $\llbracket \mu_t \rrbracket$ under the secret sharing system $\mathcal{S}$,
  and no individual Monitor computer learns $\mu_t$.
\end{quote}
This is the central privacy mechanism: even though the Monitor collectively
knows the current specification state, no single Monitor computer can extract
it without cooperation from the others.

In contrast, the System maintains its internal state $\sigma_t$ locally and
does not share it with the Monitor.
The System's state may include sensitive information such as cryptographic keys,
user data, or proprietary algorithms.
The protocol ensures that this state remains entirely private.

\subsection{Round-by-round execution}
\label{sec:round-execution}

Execution proceeds in discrete rounds indexed by $t = 1,2,\dots$.
Each round transforms the shared Monitor state from $\llbracket \mu_t \rrbracket$
to $\llbracket \mu_{t+1} \rrbracket$ and produces a single violation flag bit.

\paragraph{Round $t$ consists of three phases.}

\begin{enumerate}
        

  \item \textbf{Sharing of observable output.}
        The System computes a sharing
        $
          \llbracket \sigma_t \rrbracket = \Share(\sigma_t)
        $
        under the agreed sharing system $\mathcal{S}$.
        The System sends the $i^\text{th}$ share of $\llbracket \sigma_t \rrbracket$
        to the $i^\text{th}$ Monitor computer using a private channel.
        
        After this step, the Monitor entity jointly holds a sharing of $\sigma_t$,
        while no individual Monitor computer learns any information about
        $\sigma_t$ beyond its own share.
        By Property~\ref{prop:privacy}, any strict subset of Monitor computers
        learns nothing about $\sigma_t$.

  \item \textbf{Specification evaluation.}
        Using the shared inputs $\llbracket \mu_t \rrbracket$ (from the
        previous round) and $\llbracket \sigma_t \rrbracket$ (from the current round),
        the Monitor computers jointly evaluate the specification.
        
        More precisely, they execute a secure multiparty computation that produces:
        \begin{itemize}
          \item a shared violation flag
                $\llbracket \phi(\mu_t, \sigma_t) \rrbracket \in \llbracket \{0,1\} \rrbracket$, and
          \item a shared encoding of the next specification state
                $\llbracket \mu_{t+1} \rrbracket
                  = \llbracket \delta(\mu_t, \sigma_t) \rrbracket$.
        \end{itemize}
        
        This computation is expressed entirely in terms of operations and
        conversions supported by $\mathcal{S}$.
        For example, evaluating a comparison $\sigma_t > c$ may require converting
        $\llbracket \sigma_t \rrbracket$ from arithmetic to Boolean form,
        performing a bitwise comparison, and converting the result back.
        All intermediate values remain secret-shared throughout---no
        intermediate result is ever reconstructed.

  \item \textbf{Flag reconstruction and control flow.}
        The Monitor computers jointly reconstruct the shared bit
        $\llbracket \phi(\mu_t, \sigma_t) \rrbracket$ using the $\Reconstruct$ function.
        This is the \emph{only} value reconstructed in each round.
        
        If the reconstructed value is $\top$ (a violation),
        the protocol terminates and reports a violation to an external observer.
        Otherwise, the reconstructed flag is discarded, and the Monitor
        retains $\llbracket \mu_{t+1} \rrbracket$ as the persistent shared
        state for the next round.
        The specification state $\mu_{t+1}$ is never revealed.
\end{enumerate}


\subsection{Invariants and information flow}
\label{sec:invariants}

We now formalize the key properties that hold throughout execution.

\begin{property}[State evolution invariant]
\label{prop:state-invariant}
At the beginning of each round $t$, the Monitor entity jointly holds
$\llbracket \mu_t \rrbracket$, and no other information about the execution
history is stored in the clear.
The only values ever reconstructed during the protocol are the violation flags
$\phi(\mu_1, \sigma_1), \phi(\mu_2, \sigma_2), \ldots$ at each round.
\end{property}

This invariant ensures that the protocol's information leakage is minimal and
explicit: one bit per round, and nothing else.

\subsection{Computation model}
\label{sec:computation-model}

To make the specification evaluation phase concrete, we describe the
computational substrate on which the Monitor evaluates the specification.
The key challenge is to explicitly track where share conversions occur and
ensure that all operations are type-safe with respect to the sharing representation.

We adopt a typed instruction set architecture with register-based memory,
similar to the approach used in practical MPC systems such as
MP-SPDZ~\cite{keller2020mp}.
While this design is inspired by existing MPC frameworks, we formalise it in \cref{app:Register}
and provide a precise operational semantics for our protocol and to make the
cost of share conversions explicit in the analysis.

To model the execution of privacy-preserving monitoring, we formalise the Monitor as an abstract machine operating on a finite set of typed registers, where each register $R_i$  holds a value secret-shared across distributed parties. We define a specific instruction set that manipulates these registers without reconstructing secrets, enforcing a strict type discipline: Arithmetic operations ($\mathtt{ADD}$, $\mathtt{MUL}$) operate on arithmetic shares, Boolean operations ($\mathtt{AND}$, $\mathtt{XOR}$) on Boolean shares, and Comparison operations ($\mathtt{LT}$, $\mathtt{EQ}$) serve as the bridge between the two. Comparison instructions internally perform the expensive share conversion protocols necessary to translate arithmetic inputs into Boolean outputs. This abstraction allows us to compile high-level specifications into well-typed instruction sequences. It further helps us directly compute the computational cost of share conversions.

\begin{property}[Information flow from System to Monitor]
\label{prop:info-flow-system}
All information flow from the System to the Monitor occurs exclusively via
fresh sharings of observable outputs $\sigma_t$.
More explicitly, the System never receives information derived from $\llbracket \mu_t \rrbracket$
or from intermediate Monitor computations.
\end{property}

This property ensures that the System cannot learn what the Monitor is checking
or adapt its behaviour based on the specification state.

\begin{property}[Information flow from Monitor to outside]
\label{prop:info-flow-monitor}
All information flow from the Monitor to the outside world occurs exclusively
via reconstruction of violation flags $\phi(\mu_t, \sigma_t)$.
In particular:
  individual Monitor computers never receive information about $\sigma_t$
        beyond what is implied by their shares of $\sigma_t$ and the reconstructed flag, and 
  no information about $\mu_t$ or intermediate values is ever revealed
        to any party.
\end{property}

Together, Properties~\ref{prop:info-flow-system} and~\ref{prop:info-flow-monitor}
formalize the separation between System privacy and Monitor privacy.

\begin{restatable}[Compositionality across rounds]{theorem}{compositionality}
\label{thm:compositionality}
Privacy is preserved under unbounded repetition of the protocol.
\end{restatable}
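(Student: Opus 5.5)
The plan is a simulation-based induction over rounds. For a fixed strict subset $I$ of Monitor computers (or the System together with some Monitor computers, when we consider the System as an adversary), I will construct a simulator $\mathsf{Sim}_T$. It receives only the flags $\phi(\mu_1,\sigma_1),\dots,\phi(\mu_T,\sigma_T)$ together with the inputs of the corrupted parties. It outputs a distribution over views that is identical to the real joint view of $I$ over the first $T$ rounds. Because $T$ is arbitrary, and the protocol stops at the first $\top$, this gives privacy for unbounded repetition. The point of proving the claim for the \emph{joint} view over all rounds is that privacy of each round in isolation is not enough. The state $\llbracket \mu_{t+1}\rrbracket$ carried into the next round correlates consecutive rounds, so I must show that the shares of $\mu_t$ held by $I$ are, jointly with everything $I$ has already seen, distributed independently of $\mu_t$.

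The key steps are as follows.

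\textbf{Step 1 (the inductive invariant).} Strengthen the statement so that induction goes through. At the start of round $t$, conditioned on the real view $V_{t-1}$ of $I$ through round $t-1$, the shares $\{\llbracket \mu_t\rrbracket_i\}_{i\in I}$ are distributed exactly as $\{\Share(\mu')_i\}_{i\in I}$ for any fixed $\mu'$. Moreover, $V_{t-1}$ itself is perfectly simulatable from $\phi_1,\dots,\phi_{t-1}$. The base case holds because $\llbracket\mu_1\rrbracket$ is a fresh sharing, so Property~\ref{prop:privacy} applies.

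\textbf{Step 2 (one round).} In round $t$ the view of $I$ consists of three parts:
\begin{enumerate}
\item its shares of the fresh sharing $\Share(\sigma_t)$, which are independent of $\sigma_t$ and of the past by Property~\ref{prop:privacy}, since the System uses fresh randomness;
\item the transcripts of the sub-protocols $\Pi_f$ and $\Convert$ used to evaluate $\delta$ and $\phi$ in the register machine;
\item the reconstructed flag.
\end{enumerate}
For the second part I apply Property~\ref{prop:closure} and Definition~\ref{def:share-conversion} one instruction at a time, in sequence, as in Remark~\ref{rem:conversion-composition}. This yields a simulator for the evaluation phase that uses only the shares held by $I$. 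For the third part, the messages exchanged during reconstruction of $\llbracket\phi\rrbracket$ are simulated by sampling shares of the honest parties consistent with the known flag bit. Information flow to the System is handled by Property~\ref{prop:info-flow-system}: the System receives nothing except the flag, so its view is trivially simulatable.

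\textbf{Step 3 (re-establishing the invariant for $\mu_{t+1}$).} This is the step I expect to be the main obstacle. The abstract properties only assert that sub-protocol views are independent of the secret inputs; they do not state explicitly that the \emph{output} sharing $\llbracket\mu_{t+1}\rrbracket$ is re-randomised, meaning that its $I$-shares are fresh-looking given the transcript. Without this, leakage could accumulate across rounds even though each round is individually private. My first approach is to read the privacy clause of Property~\ref{prop:closure} ("any strict subset learns nothing about \dots $f(v_1,\dots,v_r)$") as a statement about the joint view including the output shares. This is the standard secure-composition (universal composability for perfect, semi-honest security) reading in~\cite{Lin17,cramer2000general}. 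If that reading is too weak, I will instead add a refresh step: each round ends by adding a fresh sharing of $0$ to $\llbracket\mu_{t+1}\rrbracket$. This makes Property~\ref{prop:privacy} apply directly, costs only local operations, and does not change correctness. With the invariant re-established, the simulators compose sequentially, because perfect indistinguishability is preserved under composition with no loss per round. Induction on $T$ then completes the argument, and because the bound incurs no per-round loss it holds uniformly for every $T$.
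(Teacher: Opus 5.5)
Your proposal is correct in substance and has the same skeleton as the paper's proof. Only flags are ever opened (Property~\ref{prop:state-invariant}), the state stays shared, and privacy is carried forward round by round. Where you differ is in how the carried-over state is justified. The paper cites Property~\ref{prop:privacy} to say that a coalition's shares of $\llbracket \mu_{t+1} \rrbracket$ are independent of $\mu_{t+1}$, and concludes that nothing accumulates. That property, however, concerns only the marginal distribution of the coalition's part of a \emph{fresh} output of $\Share$. By contrast, $\llbracket \mu_{t+1}\rrbracket$ is produced by the sub-protocols $\Pi_f$ and $\Convert$. The conclusion also needs independence \emph{jointly} with everything the coalition has already seen, including the honest computers' shares exposed when earlier flags were opened. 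Your simulation-based induction isolates exactly this cross-round correlation (your Step~3), and with perfect security it gives a bound uniform in $T$. The price is an assumption stronger than the properties as written: either a simulation-based, composable reading of Property~\ref{prop:closure} and Definition~\ref{def:share-conversion}, or a change to the protocol. As written, those only constrain the coalition's view during a single invocation, and for $\Convert$ only on a fresh input $\Share_1(v)$. The paper's argument is shorter and stays inside its own axioms. But the step from marginal independence of the state shares to ``no accumulated leakage'' is precisely what those axioms leave open; yours is an actual proof under an explicitly named extra hypothesis. If some conversions are only computationally private, as the paper permits, your no-loss-per-round claim becomes a hybrid argument whose loss grows with $T$.

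There are three points to tighten. First, the invariant of Step~1 cannot be a conditional statement given $V_{t-1}$. The coalition's shares of $\mu_t$ are computed from its own view; after a Beaver multiplication, for instance, they are determined by its triple shares and the opened values $e,f$. So conditioning on $V_{t-1}$ fixes them. State instead that the joint distribution of $V_{t-1}$ together with these shares can be simulated from $\phi(\mu_1,\sigma_1),\dots,\phi(\mu_{t-1},\sigma_{t-1})$. That is what the composable reading provides. It also covers local operations such as addition, whose output shares are never fresh relative to their inputs yet leak nothing.

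Second, Step~2 needs the same freshness for $\llbracket \phi(\mu_t,\sigma_t)\rrbracket$. Sampling honest shares consistent with the flag is a faithful simulation only if, given the coalition's view, the honest shares of the flag are distributed as in a fresh sharing. So under the refresh fallback you would also have to re-randomise the flag sharing before opening it.

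Third, the refresh is not purely local. Since the honest computer is unknown, every Monitor computer must contribute a random sharing of $0$, which costs a round of communication. Alternatively the System could supply it, but that only helps against coalitions that exclude the System. The refresh also presupposes an addition on $\mathcal{D}$. Definition~\ref{def:sharing-system} does not provide one, though the additive, Shamir and XOR instantiations do.
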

Each round of the protocol is independent except for the shared state
$\llbracket \mu_t \rrbracket$ carried forward by the Monitor.
Because no additional information is revealed between rounds beyond the
violation flags, we show that privacy is preserved under unbounded repetition of the protocol.
This compositionality property is essential for long-running Systems where
the number of monitoring rounds is not fixed in advance.



\subsection{Adversary model and privacy guarantees}

Recall from Section~\ref{sec:definitions} that we work in the semi-honest
corruption model:
corrupted computers follow the protocol specification exactly but may
pool their internal state to infer additional information.
We consider static corruption, where the set of corrupted computers is fixed
before execution begins.

\begin{definition}[Adversary structure]
\label{def:adversary}
An \emph{adversary structure} is a set $(I_M)$ where
$I_M \subseteq \{M_1,\dots,M_{k_M}\}$ is the set of corrupted
computers in the Monitor entity.
We require $|I_M| < k_M$ (i.e., each entity has
at least one honest computer).
\end{definition}

The adversary controlling $I_M$ observes all messages sent to or from
corrupted computers in $I_M$, all local computations performed by corrupted computers,
and all random coins used by corrupted computers.
The adversary may not observe communication between honest computers or
their internal states.


\begin{restatable}[System privacy]{theorem}{systemprivacy}
\label{thm:system-privacy}
For any adversary structure $(I_M)$ with $|I_M| < k_M$,
the view of corrupted Monitor computers reveals nothing about the System's
internal state $\sigma_t$ or observable outputs $x_t$
beyond what is implied by the sequence of reconstructed violation flags
$\phi(\mu_1, x_1), \dots, \phi(\mu_T, x_T)$.
\end{restatable}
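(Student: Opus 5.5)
We will prove the statement by the standard simulation paradigm for the semi-honest model. Concretely, we will construct a simulator $\mathsf{Sim}$ that receives only the set $I_M$, the public description $(\delta,\phi)$ of the specification, the sharing system $\mathcal{S}$, and the sequence of flags $\phi(\mu_1,x_1),\dots,\phi(\mu_T,x_T)$. The simulator then outputs a joint view for the computers in $I_M$. We will show that this view is identically distributed to their real view, for every pair of traces that induce the same flag sequence. The real view of $I_M$ in round $t$ consists of three kinds of messages, matching the three phases of \cref{sec:round-execution}: the shares $\Share(x_t)_i$ for $i\in I_M$, the messages exchanged during the evaluation protocols $\Pi_f$ and the conversions $\Convert$, and the shares received during the reconstruction of the flag. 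By Property~\ref{prop:info-flow-system}, no other messages reach $I_M$.

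\textbf{Per-round simulation.} First, $\mathsf{Sim}$ samples the shares $\{\Share(0)_i\}_{i\in I_M}$. Property~\ref{prop:privacy} says that these have the same distribution as $\{\Share(x_t)_i\}_{i\in I_M}$. Second, for each instruction of the compiled register program of \cref{sec:computation-model}, $\mathsf{Sim}$ runs that instruction's protocol honestly on sharings of dummy values (for instance, all-zero inputs and a state sharing of $\llbracket 0\rrbracket$). Property~\ref{prop:closure} and Definition~\ref{def:share-conversion} guarantee that the view of any strict subset during $\Pi_f$ or $\Convert$ does not depend on the underlying values. Third, to simulate reconstruction of the flag, $\mathsf{Sim}$ samples the honest parties' shares of the flag conditioned on the corrupted shares it has already produced and on the known output $\phi(\mu_t,x_t)$. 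This is possible because, by Property~\ref{prop:privacy}, the corrupted shares are independent of the secret, so any completion consistent with the known bit has the correct conditional distribution. We will then argue by a hybrid argument over the instructions within a round: we replace real inputs by dummy inputs one sub-protocol at a time, and each step preserves the distribution of the view.

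\textbf{Across rounds.} The round-by-round simulation is chained over $t=1,\dots,T$. The only link between rounds is the persistent sharing $\llbracket\mu_{t}\rrbracket$. By Property~\ref{prop:state-invariant}, this sharing is never opened, and the only opened values are the flags. We would invoke \cref{thm:compositionality} at this point. Alternatively, we would do the $T$-step hybrid directly: hybrid $H_j$ uses real data in rounds after $j$ and simulated data in rounds up to $j$.

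\textbf{Main obstacle.} The hard step is the cross-round dependence. The corrupted parties' shares of $\mu_{t+1}$ are produced by the round-$t$ protocols, and these same shares are inputs to the round-$(t+1)$ protocols. Hence per-round privacy, which is stated for fresh inputs, does not immediately apply. We would first try to strengthen the invariant in the following way: at the start of each round, the joint distribution of the corrupted shares of $\mu_t$, together with the entire past view, is independent of $\mu_t$ and of $x_1,\dots,x_{t-1}$ beyond the flags. We would then show that each sub-protocol maintains this invariant, meaning that its output shares restricted to $I_M$ are fresh with respect to the corrupted parties' past view. This holds for the re-randomising protocols (Beaver multiplication, degree reduction, conversions). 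For purely local linear steps, the invariant holds because those steps are deterministic functions of shares that already satisfy it. If the abstract properties turn out too weak to give this, we would add the needed requirement explicitly to Property~\ref{prop:closure}: the simulated view must be jointly indistinguishable together with the output shares.
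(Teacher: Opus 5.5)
Your outline uses the same ingredients, in the same order, as the paper's proof: Property~\ref{prop:privacy} for the corrupted shares of $x_t$, the privacy clauses of Property~\ref{prop:closure} and Definition~\ref{def:share-conversion} for the intermediate computation, the fact that only flags are opened, and induction over rounds. You recast this as an explicit simulator with hybrids. The obstacle you isolate is real, and the paper does not resolve it. Its proof just says ``by induction over rounds.'' The proof of \cref{thm:compositionality} applies Property~\ref{prop:privacy}, which is about fresh sharings $\Share(v)$, marginally to the computed sharing $\llbracket\mu_{t+1}\rrbracket$, so invoking that theorem would not close your step. Likewise, Property~\ref{prop:closure} and the privacy clause of Definition~\ref{def:share-conversion} hold per invocation, and the latter only for inputs $\Share_1(v)$. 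Your strengthened invariant, together with a simulation-style closure requirement that includes the output shares, is exactly what the paper assumes implicitly. The same issue already arises between consecutive instructions within a round, which your per-sub-protocol invariant covers.

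One step fails as you justify it: the simulation of the flag opening. Property~\ref{prop:privacy} concerns fresh sharings. It says nothing about the conditional law of the honest shares of a \emph{computed} sharing given the whole corrupted view. For example, take $k_M=3$, Boolean sharing, $I_M=\{M_1\}$, and a flag computed as $\mathtt{XOR}\;R_f,R_a,R_a$. In the real run the honest shares are always $(0,0)$, whereas your simulator outputs $(r,r)$ with $r$ uniform. A clean repair is to reduce to maximal coalitions $|I_M|=k_M-1$. This suffices because the information given to the simulator (the flags) does not depend on $I_M$, and the view of $I\subseteq I'$ is a projection of the view of $I'$. For the additive, Boolean and threshold-$(k_M-1)$ Shamir instantiations, the single honest share is then determined by the corrupted shares and the flag. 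The opened shares are therefore a deterministic function of the simulated view and the flag. Alternatively, put the honest output shares into your strengthened closure property, and sample the completion from the law of the computed sharing rather than from $\Share$.
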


\begin{restatable}[Monitor privacy]{theorem}{monitorprivacy}
\label{thm:monitor-privacy}
For any adversary structure $(I_M)$ with $|I_M| < k_M$,
the view of corrupted System computers reveals nothing about the Monitor's
specification state $\mu_t$, the next-state function $\delta$,
or the flag function $\phi$,
beyond what is implied by the sequence of reconstructed violation flags
$\phi(\mu_1, x_1), \dots, \phi(\mu_T, x_T)$.
\end{restatable}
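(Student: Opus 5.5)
Since the System is a single computer, the natural reading of ``corrupted System computers'' is the System entity $\Sc$, possibly in coalition with any set $I_M$ of Monitor computers with $|I_M|<k_M$. I would prove the statement by a simulation argument. I would construct a simulator $\mathsf{Sim}$ that receives only the System's own inputs $\sigma_1,\dots,\sigma_T$, its random coins, and the flag sequence $\phi(\mu_1,x_1),\dots,\phi(\mu_T,x_T)$. From these it must output a view identically distributed to the real view of the corrupted parties. Since the simulator never uses $\mu_t$, $\delta$ or $\phi$ beyond the flags, this shows the view reveals nothing more.

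\textbf{Step 1: the System alone.} By Property~\ref{prop:info-flow-system}, the System only sends fresh sharings $\Share(\sigma_t)$ and never receives messages derived from $\llbracket\mu_t\rrbracket$ or from intermediate Monitor computations. So its view is $(\sigma_t)_t$, its own coins, and the reconstructed flags, possibly also the termination signal. The simulator reproduces this exactly.

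\textbf{Step 2: a coalition with $I_M$.} I would argue by a hybrid over rounds $t=1,\dots,T$. In round $t$ the view of $I_M$ consists of three parts:
\begin{itemize}
\item the shares $\Share(\sigma_t)_i$ for $i\in I_M$, which $\mathsf{Sim}$ knows because it knows $\sigma_t$ and the System's coins;
\item the transcript of the specification evaluation, i.e.\ the instances of the protocols $\Pi_f$ and $\Convert$ run on $\llbracket\mu_t\rrbracket$ and $\llbracket\sigma_t\rrbracket$;
\item the reconstruction messages for the flag.
\end{itemize}
For the evaluation transcript I would invoke Property~\ref{prop:closure} together with the privacy clause of Definition~\ref{def:share-conversion}. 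Any strict subset's view of these sub-protocols is distributed independently of the underlying values. Hence it can be simulated by running them on a dummy state $\llbracket 0\rrbracket$ and dummy functions of the same public circuit shape, padded to a fixed instruction sequence (see the register model in \cref{app:Register}). Under Property~\ref{prop:privacy}, the corrupted shares of $\llbracket\mu_{t+1}\rrbracket$ are likewise independent of $\mu_{t+1}$.

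For flag reconstruction, $\mathsf{Sim}$ knows the flag bit. It samples the honest parties' shares of the flag conditioned on the corrupted shares and the known output, which is possible for the additive, Shamir and Boolean instantiations.

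\textbf{Step 3: chaining the rounds.} I would then invoke Theorem~\ref{thm:compositionality} (or repeat its hybrid argument) to conclude indistinguishability across an unbounded, runtime-determined number of rounds.

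\textbf{Main obstacle.} The hardest step is hiding $\delta$ and $\phi$ themselves, not just $\mu_t$. All Monitor computers know $\delta$ and $\phi$, so a coalition containing even one Monitor computer learns them trivially. The statement therefore only holds if it is read as protecting against the System entity alone, or the specification is treated as a secret-shared circuit. I would first make this explicit in the proof: the corrupted coalition consists of the System, while $I_M$ concerns only $\mu_t$ leakage. I would then appeal to the data-independence of the evaluation schedule, namely that the instruction trace is fixed and public, to argue that the shape of the computation leaks nothing about $\mu_t$. A secondary subtlety is that termination is itself information, but it coincides with the first $\top$ flag and is therefore implied by the flag sequence.
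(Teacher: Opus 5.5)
Your Step~1 is, in substance, the paper's entire proof, and it suffices. The paper argues only the following points:
\begin{itemize}
\item the System's role is limited to sending fresh sharings of $\sigma_t$ and receiving the flag;
\item nothing derived from $\llbracket \mu_t \rrbracket$ or from intermediate computations is ever sent back;
\item the System holds no shares of $\mu_t$ and takes no part in evaluating $\delta$ and $\phi$.
\end{itemize}
Your simulator outputs $(\sigma_t)_t$, the System's coins and the flag sequence, with termination read off from the first $\top$. It is a cleaner formalisation of exactly this argument. It covers all rounds at once, so Step~3 is unnecessary. Note also that the hypothesis $|I_M|<k_M$ is never used in the paper's argument, which treats the System alone.

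Step~2, however, would fail, and it should be dropped rather than repaired. Your ``main obstacle'' remark applies to $\mu_t$ just as much as to $\delta$ and $\phi$.

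A coalition of the System with even one Monitor computer holds $\delta$, which every Monitor computer knows. The System knows $\sigma_1,\dots,\sigma_{t-1}$. So the coalition can simply iterate $\delta$ outside the protocol. If the initial state is part of the specification known to the Monitor computers, the coalition obtains $\mu_t$ in the clear. Even with a secret initial state, a transition such as $\delta(\mu,\sigma)=\sigma$ gives $\mu_{t+1}=\sigma_t$ outright. Hence no simulator that avoids $\mu_t$ can exist for such a coalition.

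The concrete flaw in your hybrid is the replacement by a dummy state and ``dummy functions of the same circuit shape''. The view of the corrupted Monitor computers includes their own inputs, namely the real $\delta$ and $\phi$. The dummy run is therefore trivially distinguishable from the real one.

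Property~\ref{prop:privacy}, Property~\ref{prop:closure} and the privacy clause of Definition~\ref{def:share-conversion} constrain only the protocol messages and share distributions seen by a strict subset of Monitor computers. They say nothing about what a coalition can compute from its own inputs combined with the System's.

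The statement is provable only in the reading where the corrupted party is the System alone, which is how the paper proves it. In that reading, Step~1 together with your remark on termination is already a complete proof.
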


\begin{corollary}[Joint privacy]
\label{cor:joint-privacy}
Under the semi-honest model with adversary structure $(I_M)$ satisfying
$|I_M| < k_M$,
the protocol reveals to the adversary only the sequence of violation flags.
\end{corollary}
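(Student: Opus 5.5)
The plan is to derive the corollary directly from \cref{thm:system-privacy} and \cref{thm:monitor-privacy}, together with \cref{thm:compositionality}. The argument is simulation-based. For each possible corruption pattern I will build a simulator that receives only the sequence of violation flags $\phi(\mu_1,\sigma_1),\dots,\phi(\mu_T,\sigma_T)$ and outputs a view distributed identically to the real view of the corrupted computers. The corrupted computers are a subset $I_M$ of the Monitor computers with $|I_M|<k_M$, possibly together with the System. I will split the argument into two cases according to which entity's secrets the adversary is trying to learn.

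\emph{Case 1: the corrupted set consists of Monitor computers only.} Their view in round $t$ has four parts:
\begin{enumerate}
\item their shares of $\llbracket\sigma_t\rrbracket$;
\item their shares of $\llbracket\mu_t\rrbracket$, carried over from the previous round;
\item the messages of the sub-protocols $\Pi_f$ and $\Convert$ run during specification evaluation;
\item the reconstruction messages for the flag.
\end{enumerate}
For parts 1 and 2, \cref{prop:privacy} says the corrupted shares are independent of the secret, so the simulator can sample them by sharing an arbitrary fixed value. For part 3, \cref{prop:closure} and the privacy clause of \cref{def:share-conversion} say each sub-protocol's view is indistinguishable across inputs, so the simulator runs these sub-protocols on dummy shared inputs. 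For part 4, the simulator samples the honest parties' shares conditioned on the corrupted shares and on reconstructing to the known flag bit; this is possible because the flag is given to the simulator. This step is exactly \cref{thm:system-privacy}.

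\emph{Case 2: the System is corrupted.} By \cref{prop:info-flow-system}, the System's incoming messages are at most the flag outputs, which the simulator is given. This is \cref{thm:monitor-privacy}.

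Next I extend the per-round simulation to unboundedly many rounds via \cref{thm:compositionality}, using a hybrid argument over rounds. Hybrid $H_j$ replaces the real view by the simulated view in the first $j$ rounds. Consecutive hybrids are identically distributed for the following reason. The only cross-round dependency is the persistent sharing $\llbracket\mu_{t+1}\rrbracket$. Its corrupted shares are an output of the evaluation protocol, and that protocol's privacy makes those shares independent of $\mu_{t+1}$ given the view so far. Because privacy here is perfect, equality in distribution of the finite prefixes for every $T$ gives the result for arbitrary $T$.

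The main obstacle I expect is the joint corruption of the System together with some Monitor computers. Neither theorem literally covers this case, since the System then knows $\sigma_t$, and the simulator must use $\sigma_t$ as an input. To handle it, I will rely on \cref{def:adversary}, which guarantees at least one honest Monitor computer. I will then rerun Case 1 with the simulator additionally given the corrupted System's own inputs $\sigma_1,\dots,\sigma_T$. The claim to show is that the view still reveals nothing about $\mu_t$, $\delta$ or $\phi$ beyond the flags. This needs the sharings of $\mu_t$ and of the intermediate values to be hidden from any strict subset even conditioned on $\sigma_t$, which holds because \cref{prop:privacy} is stated for all values. I would flag explicitly that this conditional form of the sub-protocol privacy is the assumption being used.
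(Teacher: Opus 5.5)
Your Cases 1 and 2, together with the round-by-round extension, are how the paper obtains this corollary. The paper gives it no separate proof: it is the conjunction of Theorem~\ref{thm:system-privacy} and Theorem~\ref{thm:monitor-privacy}, with Theorem~\ref{thm:compositionality} handling unbounded repetition. Your simulators and hybrids are a more explicit version of the same decomposition.

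\textbf{The joint-corruption case fails and should be dropped.} This is your last paragraph. The case it treats is not part of the statement, since Definition~\ref{def:adversary} puts only Monitor computers into $I_M$. More seriously, the claim you set out to prove there is false in this model. Section~\ref{sec:definitions} stipulates that every Monitor computer knows $\delta$ and $\phi$ in the clear. So a coalition of the System with a single Monitor computer learns $\delta$ and $\phi$ outright. If the initial specification state is also known to the Monitor computers as part of the specification, the coalition can further recompute every $\mu_{t+1}=\delta(\mu_t,\sigma_t)$ from the $\sigma_t$ it already holds. No conditional strengthening of Property~\ref{prop:privacy} can repair this, because the leakage does not pass through any shares. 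Privacy against such coalitions would require hiding the specification itself from individual Monitor computers, which the protocol does not attempt.

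\textbf{The carried-over shares in Case 1, part 2.} The shares of $\mu_t$ are outputs of the previous round's evaluation, not fresh outputs of $\Share$. In a multi-round simulation they are also already fixed by the simulated earlier rounds. So Property~\ref{prop:privacy} does not license sampling them anew as a sharing of a fixed value. What your hybrid step actually uses is a conditional property of each sub-protocol: the corrupted parties' new view, including their output shares, has a distribution given their earlier view that does not depend on the secrets. This is the conditional privacy you flag only for the joint case. The paper's proof of Theorem~\ref{thm:compositionality} makes the same leap, so this is not a deviation from the paper. Still, the assumption should be stated in Case 1, where it is genuinely used.
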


We end with a remark that the privacy guarantees of our protocols depend on the underlying sharing system. If the sharing system provides perfect (information-theoretic)
 privacy---as is the case for additive sharing, polynomial (Shamir) sharing,
 and Boolean XOR sharing---then our protocol achieves perfect privacy against
 computationally unbounded adversaries. However, if the sharing system or share conversion protocols rely on computational
 assumptions (e.g., garbled circuits for Boolean operations, or
 lattice-based conversions), then privacy holds only against
 polynomially-bounded adversaries.

\section{Experiments}\label{sec:experiments}

We implemented and evaluated our privacy-preserving monitoring protocol
using the MP-SPDZ framework~\cite{keller2020mp},
specifically using Shamir secret sharing with external I/O capabilities.
The implementation artifacts and experiment scripts are available in our
fork of MP-SPDZ, on the \texttt{experiments} branch.\footnote{\url{https://github.com/mahykari/MP-SPDZ/tree/experiments/}}
Our experimental setup demonstrates the practical feasibility of the
approach for real-time monitoring scenarios.

\subsection{Implementation details}

\paragraph{MPC framework.}
We use MP-SPDZ's Shamir secret sharing protocol,
which provides information-theoretic security for privacy guarantees.
Unlike computation-heavy cryptographic approaches,
Shamir secret sharing operates in finite fields and does not rely on
computational hardness assumptions.
A template moonitoring program in this framework is presented
in Appendix~\ref{app:monitor-template}.

\paragraph{Implementation architecture.}
Our implementation leverages MP-SPDZ's client interface to separate
the roles of data providers (System entities) and computing parties (Monitor entities).
Data providers contribute secret-shared inputs without participating in computation,
while Monitor computers execute the monitoring protocol without learning individual System secrets.
The implementation involved working with MP-SPDZ's primitives for secret-shared operations,
including comparisons and conditional evaluations that arise in temporal logic specifications.
Understanding how to efficiently compose these primitives—particularly for
operations like threshold comparisons and Boolean combinations of predicates—required
familiarity with the framework's approach to field arithmetic and share conversions.

\paragraph{Security configuration.}
All experiments use a 128-bit field with 3 parties and
corruption threshold 1;
i.e., no two Monitor computers pool their information.

\paragraph{Network security.}
Party-to-party communication in MP-SPDZ uses TLS encryption
by default via the \texttt{CryptoPlayer} component,
providing confidentiality, authentication, and integrity.
The System uses X.509 certificate-based authentication with TLS 1.2,
establishing two separate TLS connections per party pair
(one for sending, one for receiving) to handle bidirectional
communication.

\subsection{Experimental results}

We evaluate the performance and scalability of our approach using
the two case studies from Henzinger et al.~\cite{HKT25}:
an access control system (ACS) with bidirectional doors,
and a distributed lock management system.
Moreover, we evaluate on a specification
proposed by Banno et al. \cite{BMMBWS22},
as well as a novel scenario that showcases
a more computationally involved specification.
All experiments use a 128-bit field
(except for the specification from Banno et al.),
with 3 computers for the Monitor entity,
and run for multiple iterations to measure steady-state performance.

\subsubsection{Access Control System (ACS).}

The ACS case study monitors an office building with two types of employees
(types A and B) entering and exiting through multiple external doors.
Each door tracks four values: entries and exits for type A and type B employees.
The specification requires that the count of type A employees currently in the
building never falls below the count of type B employees.
We evaluate scalability from 10 to 1000 doors,
with Henzinger et al.~\cite{HKT25} having evaluated only
up to 30 doors in their original work.

\subsubsection{Distributed lock management.}

The lock management case study models a parallel program with multiple locks,
where each lock can be in one of two states (LOCK or UNLOCK).
At each round, the Monitor receives requests to lock, unlock, or skip each lock.
The specification ensures that lock() or unlock() is never called twice in a row
for any lock.
This scenario is inherently more Boolean-heavy than ACS,
requiring extensive Boolean operations to check state transitions.
We evaluate settings with 100 to 1000 locks.

\subsubsection{Presidential car geofence scenario.}

The Presidential car scenario demonstrates a more computationally intensive specification.
The Monitor tracks the position of a car in an $n$-dimensional integer space ($\mathbb{Z}^n$),
receiving displacement vectors at each round.
The specification requires that the car remains within an expanding sphere:
the sphere radius starts at a base level, grows linearly with time,
and stops at a maximum radius.
This scenario exercises the arithmetic capabilities of our MPC framework,
requiring non-linear arithmetic and comparisons in high-dimensional spaces.

\subsubsection{Blood sugar monitoring.}

The Blood sugar scenario, proposed by Banno et al.~\cite{BMMBWS22},
monitors a blood sugar sensor that transmits readings regularly.
We evaluate the specification $\Box_{[600,700]} (x \leq 200)$,
which requires that within any time window of length between 600 and 700 time units,
the blood sugar level never exceeds 200.
Banno et al. measured multiple specifications; we selected this one as
representative since others show similar characteristics. 

\subsection{Performance analysis}

\paragraph{Circuit complexity.}
Table~\ref{tab:combined-metrics} shows the offline preprocessing requirements
for all four scenarios.
For ACS, the number of integer triples scales linearly with the number of doors,
growing from 4,500 triples for 10 doors to 400,500 for 1000 doors.
Notably, the bit triple count (used for Boolean operations) and daBit count
remain constant,
indicating that the circuit depth for Boolean operations
is independent of the number of doors.
This is expected for our parallel monitoring architecture where each
door's computation is independent.

The lock system requires significantly more preprocessing than ACS due to its Boolean-heavy nature.
The triple count scales linearly from 60,100 (100 locks) to 600,100 (1000 locks).
Unlike ACS, the bit triple count and daBit count also scale with system size,
reflecting the extensive Boolean operations required for lock state checking.

The Presidential car scenario scales with dimensionality.
For the 4-dimensional case, we require 3,800 triples,
growing to 819,800 triples for the 1,024-dimensional case.
Like ACS, the bit triple and daBit counts remain constant,
as the core computation is arithmetic rather than Boolean.

The Blood sugar scenario requires minimal preprocessing (only 2 triples and 8 daBits),
reflecting the straightforward nature of the specification,
which primarily involves threshold comparisons over a sliding window.
\begin{table}[h]
\centering
\caption{Circuit complexity and Performance metrics}
\label{tab:combined-metrics}
\setlength{\tabcolsep}{4pt} 
\begin{tabular*}{\textwidth}{@{\extracolsep{\fill}}lrrr rrr@{}}
\toprule
 & \multicolumn{3}{c}{\textbf{Circuit Complexity}} & \multicolumn{3}{c}{\textbf{Performance}} \\
 \cmidrule(lr){2-4} \cmidrule(l){5-7}
\textbf{Scenario} & 
\textbf{Triples} & 
\makecell[b]{\textbf{Bit-}\\\textbf{Triples}} & 
\textbf{daBits} & 
\makecell[b]{\textbf{Comp.}\\\textbf{(s)}} & 
\makecell[b]{\textbf{Total}\\\textbf{Time (s)}} & 
\makecell[b]{\textbf{Data}\\\textbf{(MB)}} \\
\midrule
\multicolumn{7}{@{}l}{\textit{ACS (Doors)}} \\
\phantom{1,0}10 doors    & 4,500   & 247 & 1 & 0.030 & 0.070 & 0.013 \\
\phantom{1,0}30 doors    & 12,500  & 247 & 1 & 0.037 & 0.070 & 0.030 \\
\phantom{1,}100 doors    & 40,500  & 247 & 1 & 0.031 & 0.084 & 0.076 \\
\phantom{1,}300 doors    & 120,500 & 247 & 1 & 0.030 & 0.106 & 0.180 \\
1,000 doors              & 400,500 & 247 & 1 & 0.024 & 0.177 & 0.583 \\
\midrule
\multicolumn{7}{@{}l}{\textit{Locks}} \\
\phantom{1,}100 locks    & 60,100  & 12,700  & 100   & 0.151 & 0.161 & 0.357 \\
\phantom{1,}300 locks    & 180,100 & 38,100  & 300   & 0.412 & 0.432 & 1.070 \\
\phantom{1,}500 locks    & 300,100 & 63,500  & 500   & 0.590 & 0.613 & 1.783 \\
1,000 locks              & 600,100 & 127,000 & 1,000 & 1.230 & 1.270 & 3.564 \\
\midrule
\multicolumn{7}{@{}l}{\textit{Presidential Car ($n$--D: $n$--dimensional)}} \\
\phantom{1,02}4--D       & 3,800   & 247 & 1 & 0.061 & 0.063 & 0.956 \\
\phantom{1,0}16--D       & 13,400  & 247 & 1 & 0.060 & 0.064 & 1.269 \\
\phantom{1,0}64--D       & 51,800  & 247 & 1 & 0.063 & 0.069 & 2.535 \\
\phantom{1,}256--D       & 205,400 & 247 & 1 & 0.054 & 0.071 & 7.295 \\
1,024--D                 & 819,800 & 247 & 1 & 0.091 & 0.176 & 25.592 \\
\midrule
\multicolumn{7}{@{}l}{\textit{Blood Sugar}} \\
$\Box_{[600,700]} \left( x \leq 200 \right)$ & 2 & 0 & 8 & 0.000 & 0.078 & 0.002 \\
\bottomrule
\end{tabular*}
\end{table}
\paragraph{Performance metrics.}
On the right, Table~\ref{tab:combined-metrics} presents per-iteration performance metrics for all four scenarios.

For ACS, the computation time remains remarkably stable (0.024--0.037s) regardless
of System size, while communication overhead scales linearly.
This demonstrates the efficiency of our MPC operations,
with the bottleneck being I/O and network coordination rather than computation.

The lock system is slower than ACS due to the Boolean-heavy nature of lock state checking,
but still maintains sublinear scaling:
10× more locks results in only 7.9× increase in latency.
Computation time scales with system size but remains practical,
ranging from 0.151s to 1.230s.

For the Presidential Car scenario, computation time remains stable
across different dimensionalities (0.054--0.091s),
with the exception of the highest dimensional case (1,024-D) where
communication overhead becomes more significant.
Despite the high-dimensional computations,
the per-iteration time remains well under 200ms even for 1,024 dimensions. 

The Blood Sugar scenario executes extremely quickly with negligible computation time
and 0.078s total time per iteration. 

\begin{figure}[h]
\centering
\begin{minipage}{0.32\textwidth}
\centering
\begin{tikzpicture}
\begin{axis}[
    xbar,
    y axis line style={opacity=0},
    axis x line*=bottom,
    xtick pos=left,
    ytick pos=left,
    ytick=data,
    yticklabels={10, 30, 100, 300, 1000},
    xlabel={Time (s)},
    ylabel={ACS (Doors)},
    ylabel style={font=\tiny},
    width=\textwidth,
    height=0.18\textheight,
    xmin=0,
    xmax=0.2,
    bar width=6pt,
    enlarge y limits={abs=0.5},
    nodes near coords,
    nodes near coords align={horizontal},
    nodes near coords style={font=\tiny,/pgf/number format/fixed,/pgf/number format/precision=3},
    every node near coord/.append style={anchor=west, xshift=1pt},
    scaled x ticks=false,
    major tick length=2pt,
    xticklabel style={/pgf/number format/fixed,font=\tiny},
    yticklabel style={font=\tiny},
    xlabel style={font=\tiny},
]
\addplot[fill=blue!60] coordinates {
    (0.070, 0)
    (0.070, 1)
    (0.084, 2)
    (0.106, 3)
    (0.177, 4)
};
\end{axis}
\end{tikzpicture}
\end{minipage}%
\hfill%
\begin{minipage}{0.32\textwidth}
\centering
\begin{tikzpicture}
\begin{axis}[
    xbar,
    y axis line style={opacity=0},
    axis x line*=bottom,
    xtick pos=left,
    ytick pos=left,
    ytick=data,
    yticklabels={10, 30, 100, 300, 1000},
    xlabel={Triples ($\times 10^5$)},
    width=\textwidth,
    height=0.18\textheight,
    xmin=0,
    bar width=6pt,
    enlarge y limits={abs=0.5},
    nodes near coords,
    nodes near coords align={horizontal},
    nodes near coords style={font=\tiny,/pgf/number format/fixed,/pgf/number format/precision=2},
    every node near coord/.append style={anchor=west, xshift=1pt},
    scaled x ticks=false,
    major tick length=2pt,
    xticklabel style={/pgf/number format/fixed,font=\tiny},
    yticklabel style={font=\tiny},
    xlabel style={font=\tiny},
]
\addplot[fill=gray!70] coordinates {
    (0.045, 0)
    (0.125, 1)
    (0.405, 2)
    (1.205, 3)
    (4.005, 4)
};
\end{axis}
\end{tikzpicture}
\end{minipage}%
\hfill%
\begin{minipage}{0.32\textwidth}
\centering
\begin{tikzpicture}
\begin{axis}[
    xbar,
    y axis line style={opacity=0},
    axis x line*=bottom,
    xtick pos=left,
    ytick pos=left,
    ytick=data,
    yticklabels={10, 30, 100, 300, 1000},
    xlabel={Data (MB)},
    width=\textwidth,
    height=0.18\textheight,
    xmin=0,
    bar width=6pt,
    enlarge y limits={abs=0.5},
    nodes near coords,
    nodes near coords align={horizontal},
    nodes near coords style={font=\tiny,/pgf/number format/fixed,/pgf/number format/precision=3},
    every node near coord/.append style={anchor=west, xshift=1pt},
    scaled x ticks=false,
    major tick length=2pt,
    xticklabel style={/pgf/number format/fixed,font=\tiny},
    yticklabel style={font=\tiny},
    xlabel style={font=\tiny},
]
\addplot[fill=green!60] coordinates {
    (0.013, 0)
    (0.030, 1)
    (0.076, 2)
    (0.180, 3)
    (0.583, 4)
};
\end{axis}
\end{tikzpicture}
\end{minipage}

\vspace{0.3cm}

\begin{minipage}{0.32\textwidth}
\centering
\begin{tikzpicture}
\begin{axis}[
    xbar,
    y axis line style={opacity=0},
    axis x line*=bottom,
    xtick pos=left,
    ytick pos=left,
    ytick=data,
    yticklabels={100, 300, 500, 1000},
    xlabel={Time (s)},
    ylabel={Locks},
    ylabel style={font=\tiny},
    width=\textwidth,
    height=0.18\textheight,
    xmin=0,
    bar width=6pt,
    enlarge y limits={abs=0.5},
    nodes near coords,
    nodes near coords align={horizontal},
    nodes near coords style={font=\tiny,/pgf/number format/fixed,/pgf/number format/precision=3},
    every node near coord/.append style={anchor=west, xshift=1pt},
    scaled x ticks=false,
    major tick length=2pt,
    xticklabel style={/pgf/number format/fixed,font=\tiny},
    yticklabel style={font=\tiny},
    xlabel style={font=\tiny},
]
\addplot[fill=blue!60] coordinates {
    (0.151, 0)
    (0.432, 1)
    (0.613, 2)
    (1.270, 3)
};
\end{axis}
\end{tikzpicture}
\end{minipage}%
\hfill%
\begin{minipage}{0.32\textwidth}
\centering
\begin{tikzpicture}
\begin{axis}[
    xbar,
    y axis line style={opacity=0},
    axis x line*=bottom,
    xtick pos=left,
    ytick pos=left,
    ytick=data,
    yticklabels={100, 300, 500, 1000},
    xlabel={Triples ($\times 10^5$)},
    width=\textwidth,
    height=0.18\textheight,
    xmin=0,
    bar width=6pt,
    enlarge y limits={abs=0.5},
    nodes near coords,
    nodes near coords align={horizontal},
    nodes near coords style={font=\tiny,/pgf/number format/fixed,/pgf/number format/precision=2},
    every node near coord/.append style={anchor=west, xshift=1pt},
    scaled x ticks=false,
    major tick length=2pt,
    xticklabel style={/pgf/number format/fixed,font=\tiny},
    yticklabel style={font=\tiny},
    xlabel style={font=\tiny},
]
\addplot[fill=gray!70] coordinates {
    (0.601, 0)
    (1.801, 1)
    (3.001, 2)
    (6.001, 3)
};
\end{axis}
\end{tikzpicture}
\end{minipage}%
\hfill%
\begin{minipage}{0.32\textwidth}
\centering
\begin{tikzpicture}
\begin{axis}[
    xbar,
    y axis line style={opacity=0},
    axis x line*=bottom,
    xtick pos=left,
    ytick pos=left,
    ytick=data,
    yticklabels={100, 300, 500, 1000},
    xlabel={Data (MB)},
    width=\textwidth,
    height=0.18\textheight,
    xmin=0,
    bar width=6pt,
    enlarge y limits={abs=0.5},
    nodes near coords,
    nodes near coords align={horizontal},
    nodes near coords style={font=\tiny,/pgf/number format/fixed,/pgf/number format/precision=3},
    every node near coord/.append style={anchor=west, xshift=1pt},
    scaled x ticks=false,
    major tick length=2pt,
    xticklabel style={/pgf/number format/fixed,font=\tiny},
    yticklabel style={font=\tiny},
    xlabel style={font=\tiny},
]
\addplot[fill=green!60] coordinates {
    (0.357, 0)
    (1.070, 1)
    (1.783, 2)
    (3.564, 3)
};
\end{axis}
\end{tikzpicture}
\end{minipage}

\vspace{0.3cm}

\begin{minipage}{0.32\textwidth}
\centering
\begin{tikzpicture}
\begin{axis}[
    xbar,
    y axis line style={opacity=0},
    axis x line*=bottom,
    xtick pos=left,
    ytick pos=left,
    ytick=data,
    yticklabels={4, 16, 64, 256, 1024},
    xlabel={Time (s)},
    ylabel={Presidential Car},
    ylabel style={font=\tiny},
    width=\textwidth,
    height=0.18\textheight,
    xmin=0,
    xmax=0.2,
    bar width=6pt,
    enlarge y limits={abs=0.5},
    nodes near coords,
    nodes near coords align={horizontal},
    nodes near coords style={font=\tiny,/pgf/number format/fixed,/pgf/number format/precision=3},
    every node near coord/.append style={anchor=west, xshift=1pt},
    scaled x ticks=false,
    major tick length=2pt,
    xticklabel style={/pgf/number format/fixed,font=\tiny},
    yticklabel style={font=\tiny},
    xlabel style={font=\tiny},
]
\addplot[fill=blue!60] coordinates {
    (0.063, 0)
    (0.064, 1)
    (0.069, 2)
    (0.071, 3)
    (0.176, 4)
};
\end{axis}
\end{tikzpicture}
\end{minipage}%
\hfill%
\begin{minipage}{0.32\textwidth}
\centering
\begin{tikzpicture}
\begin{axis}[
    xbar,
    y axis line style={opacity=0},
    axis x line*=bottom,
    xtick pos=left,
    ytick pos=left,
    ytick=data,
    yticklabels={4, 16, 64, 256, 1024},
    xlabel={Triples ($\times 10^5$)},
    width=\textwidth,
    height=0.18\textheight,
    xmin=0,
    bar width=6pt,
    enlarge y limits={abs=0.5},
    nodes near coords,
    nodes near coords align={horizontal},
    nodes near coords style={font=\tiny,/pgf/number format/fixed,/pgf/number format/precision=2},
    every node near coord/.append style={anchor=west, xshift=1pt},
    scaled x ticks=false,
    major tick length=2pt,
    xticklabel style={/pgf/number format/fixed,font=\tiny},
    yticklabel style={font=\tiny},
    xlabel style={font=\tiny},
]
\addplot[fill=gray!70] coordinates {
    (0.038, 0)
    (0.134, 1)
    (0.518, 2)
    (2.054, 3)
    (8.198, 4)
};
\end{axis}
\end{tikzpicture}
\end{minipage}%
\hfill%
\begin{minipage}{0.32\textwidth}
\centering
\begin{tikzpicture}
\begin{axis}[
    xbar,
    y axis line style={opacity=0},
    axis x line*=bottom,
    xtick pos=left,
    ytick pos=left,
    ytick=data,
    yticklabels={4, 16, 64, 256, 1024},
    xlabel={Data (MB)},
    width=\textwidth,
    height=0.18\textheight,
    xmin=0,
    bar width=6pt,
    enlarge y limits={abs=0.5},
    nodes near coords,
    nodes near coords align={horizontal},
    nodes near coords style={font=\tiny,/pgf/number format/fixed,/pgf/number format/precision=3},
    every node near coord/.append style={anchor=west, xshift=1pt},
    scaled x ticks=false,
    major tick length=2pt,
    xticklabel style={/pgf/number format/fixed,font=\tiny},
    yticklabel style={font=\tiny},
    xlabel style={font=\tiny},
]
\addplot[fill=green!60] coordinates {
    (0.956, 0)
    (1.269, 1)
    (2.535, 2)
    (7.295, 3)
    (25.592, 4)
};
\end{axis}
\end{tikzpicture}
\end{minipage}
\caption{Performance comparison across all case studies: (left) per-iteration timing, (center) circuit complexity (number of triples), (right) communication overhead. The three rows show ACS, Locks, and Presidential Car scenarios respectively.}
\label{fig:all-performance}
\end{figure}
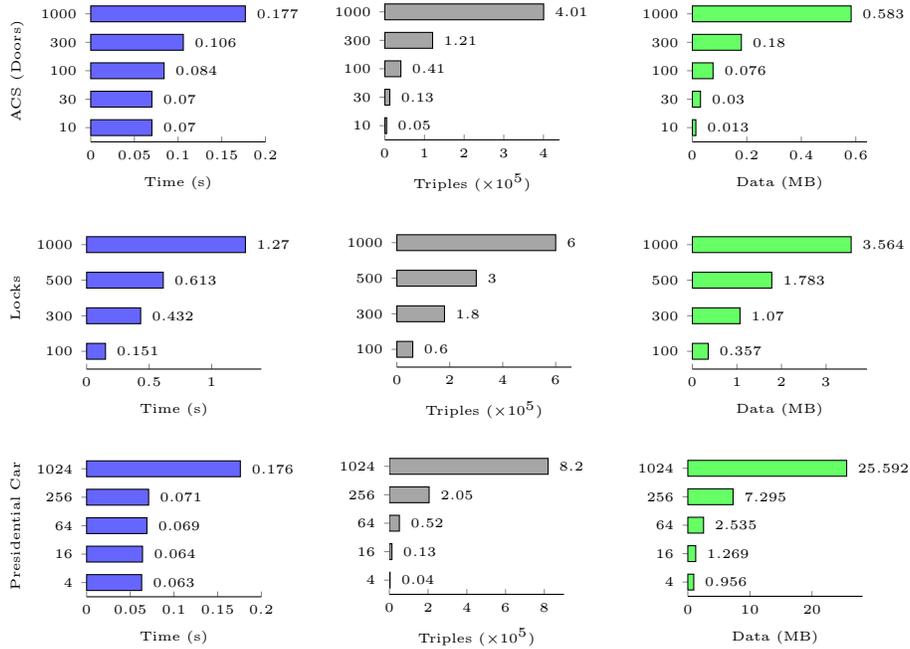

\subsubsection{Comparison with prior work}
\paragraph{Performance improvement.}
The prior work by Henzinger et al.~\cite{HKT25} using
garbled circuits with an industrial standard security parameter of 1536 reports execution times
of \emph{at least} 18 seconds per iteration for comparable specifications.
As specification complexity increases,
their overhead grows to several minutes per iteration.
The approach of Banno et al.~\cite{BMMBWS22} is even more constrained; 
as noted by Henzinger et al., it renders benchmarks like ACS and Locks 
infeasible due to the prohibitive size of the DFA
required to represent them. Similarly, it fails to handle the geofencing 
(Presidential car) scenario again due to similar reasons. On the blood glucose benchmarks---which 
Banno et al. can process---our implementation matches the faster of the two protocols they consider, completing each iteration in a fraction of a second.

In contrast, our approach achieves:
\begin{itemize}
    \item \textbf{ACS:} 0.07--0.18s per iteration (100--250× faster)
    \item \textbf{Locks:} 0.16--1.3s per iteration (14--112× faster)
    \item \textbf{Presidential Car:} 0.06--0.18s per iteration
    \item \textbf{Blood Sugar:} 0.08s per iteration
\end{itemize}

This represents a \emph{2--3 orders of magnitude improvement} in
execution time while maintaining information-theoretic security guarantees
that do not rely on computational hardness assumptions.

\paragraph{Scalability Characteristics.}
Figure~\ref{fig:all-performance} illustrates
the performance characteristics of the case studies.
For ACS and Presidential car, timing remains nearly constant while preprocessing requirements
and communication scale linearly.
For the lock system, all metrics scale with system size
due to the Boolean-heavy nature of lock state checking.
Despite this, our Shamir secret sharing approach maintains practical
performance that is 2--3 orders of magnitude faster than cryptographic
approaches, making real-time monitoring of large-scale systems practical.


\section{Discussion}
We have presented a privacy-preserving monitoring protocol that significantly outperforms existing alternatives in terms of scalability. This efficiency is achieved by leveraging an architectural distribution of trust: we replace heavy cryptographic primitives with lightweight secret sharing, which requires an honest majority assumption (i.e., at least one monitor party must remain uncompromised). This represents a fundamental trade-off, where we accept stronger architectural assumptions to achieve the performance required for real-time monitoring.

\paragraph{Handling large System states.} An other challenge arises when the System possesses a large internal state (e.g., a massive database). Sharing the entire state in such settings is impractical. A potential solution is to augment our protocol with Private Information Retrieval (PIR). In this setup, the Monitor acts as a PIR client to retrieve specific shares of the state without revealing which indices it is accessing, while the System acts as the PIR server~\cite{CKGS98}. This extension would require additional cryptographic machinery beyond the primitives introduced in Section~\ref{sec:abstract-sharing} and is a promising direction for future research.

\paragraph{Integration with existing tools.} While our prototype demonstrates the feasibility of the core protocol, we have yet to integrate this setup with established runtime verification frameworks. Future work will focus on bridging our MP-SPDZ implementation with standard monitoring tools such as BeepBeep~\cite{MKH18}, DejaVu~\cite{HPU18}, or \textsc{MonPoly}~\cite{BKZ17}, allowing more familiar specification languages.

\paragraph{Robustness against active adversaries.} Although our primary evaluation focuses on the semi-honest setting, the protocol can be hardened against active (malicious) adversaries who may deviate from the specification. To prevent adaptive attacks where corrupt parties tailor their shares based on others' messages, we can enforce input commitments, requiring parties to cryptographically bind themselves to their shares before revealing them. To detect unauthorised modifications to stored values, the protocol can be lifted to authenticated secret sharing using information-theoretic Message Authentication Codes (MACs). These techniques however incur additional communication and computational overhead. 
Frameworks such as SPDZ~\cite{damgard2012multiparty} and
MASCOT~\cite{keller2016actively} provide efficient malicious-secure protocols
based on these principles.


\newpage
 \begin{credits}
 \subsubsection{\ackname}This work is a part of projects VAMOS that has received funding from the European Research Council (ERC), grant agreement No 101020093 and the Austrian Science Fund (FWF) SFB project SpyCoDe F8502.

 \end{credits}
%
%
%
\bibliographystyle{splncs04}
\bibliography{references}
\newpage
\appendix
\section{Instantiations of sharing systems}\label{app:SharingSystemsConcrete}
\subsection{Instantiations}
\label{sec:instantiations}

We now give several concrete instantiations of the abstract model.
These examples demonstrate that the abstraction captures a wide range of
practical secret-sharing schemes, each with different computational and
security trade-offs.

\paragraph{Additive sharing over a ring.}
Let $R$ be a finite ring (e.g., $\mathbb{Z}_{2^{64}}$ or $\mathbb{Z}_p$
for prime $p$).
Define $\mathcal{V} = \mathcal{D} = R$.
A value $v \in R$ is shared by choosing
$v^{(1)},\dots,v^{(k-1)} \xleftarrow{\$} R$
uniformly at random and setting
\[
  v^{(k)} = v - \sum_{i=1}^{k-1} v^{(i)} \pmod{R}.
\]
Reconstruction is summation modulo $R$:
\[
  \Reconstruct(v^{(1)},\dots,v^{(k)}) = \sum_{i=1}^k v^{(i)} \pmod{R}.
\]
This scheme supports addition and scalar multiplication locally
(without interaction), and multiplication via interactive protocols
such as Beaver multiplication~\cite{beaver1991efficient}.
The operation set $\mathcal{O}$ includes ring operations $\{+, -, \times\}$
and comparisons via conversion to Boolean form~\cite{catrina2010secure,damgard2006unconditionally}.

\paragraph{Polynomial (Shamir) sharing.}
Let $\mathbb{F}$ be a finite field and fix a threshold $t < k$.
Define $\mathcal{V} = \mathcal{D} = \mathbb{F}$.
A value $v \in \mathbb{F}$ is shared by sampling a random polynomial
$p(x) \in \mathbb{F}[x]$ of degree at most $t$ subject to $p(0)=v$,
and distributing shares $p(\alpha_1),\dots,p(\alpha_k)$,
where $\alpha_1,\dots,\alpha_k \in \mathbb{F}$ are distinct non-zero
evaluation points fixed in advance.
Reconstruction is Lagrange polynomial interpolation at $x=0$.
This scheme provides information-theoretic privacy
against any coalition of up to $t$ computers, and supports addition and
multiplication via Shamir's protocols~\cite{shamir1979share}.

\paragraph{Boolean sharing.}
Let $\mathcal{V} = \{0,1\}$ and $\mathcal{D} = \mathbb{Z}_2$.
A bit $b \in \{0,1\}$ is shared additively modulo $2$:
computers hold random bits $b^{(1)},\dots,b^{(k)}$ such that
$b = b^{(1)} \oplus \cdots \oplus b^{(k)}$.
This instantiation supports XOR locally and AND via garbled circuits~\cite{yao1986generate} or
other interactive protocols.
It is particularly suitable for Boolean circuits, branching, and predicate
evaluation.

\paragraph{Efficiency of share conversion.}
\begin{definition}[Efficient Share Conversion]
\label{def:efficient-conversion}
A share conversion
$\Convert_{\mathcal{S}_1 \to \mathcal{S}_2}$
is said to be \emph{efficient} if it can be realised by a protocol that:
\begin{enumerate}
  \item runs in a constant number of interaction rounds,
        independent of the value being converted, the number of computers,
        and the security parameter;
  \item has total communication complexity bounded by a polynomial in
        $\log|\mathcal{V}_1 \cap \mathcal{V}_2|$, the number of computers $k$,
        and the security parameter $\lambda$.
\end{enumerate}
\end{definition}

Intuitively, efficiency means that share conversion can be used as a
building block inside each protocol round without asymptotically dominating
the cost of the overall computation and without requiring
reconstruction of the underlying secret.

\paragraph{Share conversion examples.}
Typical efficient conversions include:
\begin{itemize}
  \item \emph{Arithmetic-to-Boolean conversion (bit decomposition):}
        decomposing a shared integer $\llbracket x \rrbracket$ modulo $2^n$ 
        into $n$ shared bits $\llbracket b_0 \rrbracket, \ldots, \llbracket b_{n-1} \rrbracket$
        such that $x = \sum_{i=0}^{n-1} b_i \cdot 2^i$.
        This enables bitwise operations and comparisons.
        Modern protocols~\cite{rathee2020cryptflow2,dalskov2020secure} achieve
        constant-round complexity using precomputed correlated randomness,
        with total communication $O(n)$ field elements per conversion.
        See \cref{sec:nonlinear} for protocol details.
        
  \item \emph{Boolean-to-arithmetic conversion:}
        embedding shared bits $\llbracket b_0 \rrbracket, \ldots, \llbracket b_{n-1} \rrbracket$
        into an arithmetic share $\llbracket x \rrbracket = \llbracket \sum_i b_i \cdot 2^i \rrbracket$.
        This can be done in constant rounds with $O(n \log n)$ communication
        using parallel prefix networks~\cite{demmler2015aby},
        or $O(n)$ communication with $O(\log n)$ rounds.
        This allows arithmetic operations on Boolean values.
        
  \item \emph{Cross-domain arithmetic conversion:}
        converting between different arithmetic domains,
        e.g., from $\mathbb{Z}_{2^k}$ to $\mathbb{Z}_p$ for prime $p$,
        or between additive and Shamir sharings.
        These conversions typically require $O(1)$ rounds
        and $O(\log |\mathcal{V}|)$ communication per value~\cite{cramer2018spdz2k}.
\end{itemize}
All of the above conversions admit efficient realisations in the sense of
Definition~\ref{def:efficient-conversion}, and satisfy the correctness and
privacy properties of Definition~\ref{def:share-conversion}.
See~\cite{demmler2015aby,mohassel2018aby3} for representative constructions.
Modern frameworks such as
SPDZ~\cite{damgard2012multiparty,damgard2013practical},
Overdrive~\cite{keller2018overdrive}, and
MP-SPDZ~\cite{keller2020mp}
provide efficient implementations of these primitives.

\subsection{Non-linear operations}
\label{sec:nonlinear}

While addition and scalar multiplication can be performed locally on additive shares,
non-linear operations such as comparison, equality testing, and multiplication
require interaction and often benefit from converting between sharing representations.

\paragraph{Comparison via bit decomposition.}
To evaluate $\llbracket x \rrbracket < \llbracket y \rrbracket$ where $x, y \in \mathbb{Z}_{2^n}$
are arithmetic-shared values, note that the relation $x < y$ does not preserve under addition
of random shares (i.e., $x^{(i)} < y^{(i)}$ does not imply $x < y$).
One approach~\cite{catrina2010secure,damgard2006unconditionally} converts to Boolean representation:

\begin{enumerate}
  \item \textbf{Bit decomposition:}
        Convert both arithmetic sharings to Boolean form.
        For a shared value $\llbracket x \rrbracket$, produce shared bits
        $\llbracket b_0 \rrbracket, \ldots, \llbracket b_{n-1} \rrbracket$
        (each a Boolean sharing in $\mathbb{Z}_2$) such that
        $x = \sum_{i=0}^{n-1} b_i \cdot 2^i$.
        
        Modern protocols achieve this in constant rounds with $O(n)$ communication
        per value~\cite{rathee2020cryptflow2,dalskov2020secure}.
        The key idea is to precompute correlated randomness offline:
        generate random Boolean sharings $\llbracket r_0 \rrbracket, \ldots, \llbracket r_{n-1} \rrbracket$
        along with their arithmetic sum $\llbracket r \rrbracket = \llbracket \sum_i r_i \cdot 2^i \rrbracket$.
        
        In the online phase, compute and open $c = x + r \bmod 2^n$.
        Since $r$ is uniformly random, $c$ reveals nothing about $x$.
        The computers then locally compute the public bits $c_0, \ldots, c_{n-1}$ of $c$,
        and use the precomputed randomness to compute
        $\llbracket b_i \rrbracket = c_i - \llbracket r_i \rrbracket - \llbracket \text{carry}_{i-1} \rrbracket$,
        where $\llbracket \text{carry}_{i-1} \rrbracket$ denotes the carry bit from position $i-1$
        in the binary addition $x + r$,
        via a constant-round protocol.
        
        Earlier protocols based on sequential carry propagation~\cite{catrina2010secure}
        required $O(n)$ rounds, but are now superseded by constant-round constructions.

  \item \textbf{Bitwise comparison:}
        With both $x$ and $y$ in Boolean form, evaluate a Boolean circuit
        that computes the comparison bit by bit, starting from the most significant bit.
        The circuit has depth $O(\log n)$ and size $O(n)$,
        yielding $O(\log n)$ rounds and $O(n)$ communication~\cite{damgard2006unconditionally}.
        The circuit outputs a single Boolean share $\llbracket x < y \rrbracket$.

  \item \textbf{Optional conversion back:}
        If the result is needed in arithmetic form (e.g., for conditional operations),
        convert the Boolean share to an arithmetic share in $O(1)$ rounds
        with $O(1)$ communication via Boolean-to-arithmetic conversion.
\end{enumerate}

\noindent
\textbf{Overall complexity:}
The entire comparison requires $O(\log n)$ rounds and $O(n)$ communication per comparison,
where $n$ is the bit length of the values being compared.

\paragraph{Equality testing.}
Equality testing $\llbracket x \rrbracket \stackrel{?}{=} \llbracket y \rrbracket$ can be performed more
efficiently than general comparison.
After bit decomposition (which dominates the cost at $O(1)$ rounds, $O(n)$ communication),
compute the bitwise XOR of corresponding bits:
$\llbracket d_i \rrbracket = \llbracket x_i \rrbracket \oplus \llbracket y_i \rrbracket$
for $i = 0, \ldots, n-1$.
This is a local operation (no communication).
Then $x = y$ if and only if all $d_i = 0$, which can be tested by computing
the OR-tree: $\llbracket d_0 \vee d_1 \vee \cdots \vee d_{n-1} \rrbracket$.
Using a binary tree of OR gates, this requires $O(\log n)$ rounds and $O(n)$ multiplications
in $\mathbb{Z}_2$ (AND gates can implement OR via De Morgan's laws).
The output is $0$ (true) if $x = y$, and $1$ (false) otherwise.

This approach has the same asymptotic complexity as comparison
($O(\log n)$ rounds, $O(n)$ communication),
but with smaller constants since it avoids the full comparison circuit.

\paragraph{Multiplication in arithmetic domains.}
Multiplication $\llbracket z \rrbracket = \llbracket x \rrbracket \times \llbracket y \rrbracket$
in additive or Shamir sharing requires interaction.
The standard approach uses \emph{Beaver triples}~\cite{beaver1991efficient}:
precomputed sharings $\llbracket a \rrbracket, \llbracket b \rrbracket, \llbracket c \rrbracket$
where $c = a \times b$ and $a, b$ are uniformly random.
Given such a triple, the computers open
$\llbracket e \rrbracket = \llbracket x - a \rrbracket$ and
$\llbracket f \rrbracket = \llbracket y - b \rrbracket$
in one round,
then locally compute
\[
  \llbracket z \rrbracket = \llbracket c \rrbracket + e \cdot \llbracket b \rrbracket 
  + f \cdot \llbracket a \rrbracket + e \cdot f.
\]
Since $e$ and $f$ are masked by random values, they reveal nothing about $x$ or $y$.
This requires $O(1)$ rounds and $O(1)$ field elements of communication per multiplication,
assuming Beaver triples are precomputed offline~\cite{damgard2012multiparty}.

\section{Operational semantics for our protocol}\label{app:Register}

\paragraph{Typed registers.}
The Monitor maintains a finite collection of abstract \emph{storage locations}
$R_0, R_1, \dots, R_{n-1}$, which we call \emph{registers} by analogy with
machine instruction sets.
Each register is a logical storage unit that holds a shared value under
the agreed sharing system $\mathcal{S}$.
Concretely, register $R_i$ represents a tuple of shares $(s_i^{(1)}, \dots, s_i^{(k)})$
distributed across the $k$ Monitor computers, where computer $j$ stores share $s_i^{(j)}$.
These are not physical CPU registers, but rather abstract memory locations in
the distributed computation.

Each register has a \emph{sharing type} indicating which representation is
currently in use:
\begin{itemize}
  \item \textbf{Arithmetic} ($\mathtt{Arith}$):
        the register holds an additive or polynomial sharing of a ring/field element.
        Supports operations $\{+, -, \times\}$ natively.
  \item \textbf{Boolean} ($\mathtt{Bool}$):
        the register holds a Boolean sharing (XOR-based).
        Supports operations $\{\land, \lor, \oplus, \neg\}$ natively.
\end{itemize}
We write $R_i : \mathtt{Arith}$ to denote that register $R_i$ currently holds
an arithmetic sharing, and similarly for $\mathtt{Bool}$.

At the beginning of round $t$, the specification state $\llbracket \mu_t \rrbracket$
is loaded into a designated subset of registers, and the freshly shared
observable output $\llbracket x_t \rrbracket$ is loaded into other registers.
All registers are initialised with a consistent sharing type determined by
the specification's requirements.

\paragraph{Instruction set.}
The Monitor evaluates the specification by executing a sequence of typed
instructions.
Each instruction operates on shared values without ever reconstructing them,
except for the final violation flag.

The instruction set includes:
\begin{itemize}
  \item \textbf{Arithmetic operations}
        (all operands and result must be $\mathtt{Arith}$):
        \begin{align*}
          &\mathtt{ADD}\; R_i, R_j, R_k
            &&\text{compute } \llbracket R_i \rrbracket \gets \llbracket R_j + R_k \rrbracket \\
          &\mathtt{SUB}\; R_i, R_j, R_k
            &&\text{compute } \llbracket R_i \rrbracket \gets \llbracket R_j - R_k \rrbracket \\
          &\mathtt{MUL}\; R_i, R_j, R_k
            &&\text{compute } \llbracket R_i \rrbracket \gets \llbracket R_j \times R_k \rrbracket \\
          &\mathtt{ADDC}\; R_i, R_j, c
            &&\text{compute } \llbracket R_i \rrbracket \gets \llbracket R_j + c \rrbracket
            \text{ for public } c
        \end{align*}

  \item \textbf{Boolean operations}
        (all operands and result must be $\mathtt{Bool}$):
        \begin{align*}
          &\mathtt{XOR}\; R_i, R_j, R_k
            &&\text{compute } \llbracket R_i \rrbracket \gets \llbracket R_j \oplus R_k \rrbracket \\
          &\mathtt{AND}\; R_i, R_j, R_k
            &&\text{compute } \llbracket R_i \rrbracket \gets \llbracket R_j \land R_k \rrbracket \\
          &\mathtt{NOT}\; R_i, R_j
            &&\text{compute } \llbracket R_i \rrbracket \gets \llbracket \neg R_j \rrbracket
        \end{align*}

  \item \textbf{Comparison operations}
        (operands must be $\mathtt{Arith}$, result is $\mathtt{Bool}$):
        \begin{align*}
          &\mathtt{LT}\; R_i, R_j, R_k
            &&\text{compute } \llbracket R_i \rrbracket_{\mathtt{Bool}} \gets \llbracket R_j < R_k \rrbracket \\
          &\mathtt{EQ}\; R_i, R_j, R_k
            &&\text{compute } \llbracket R_i \rrbracket_{\mathtt{Bool}} \gets \llbracket R_j = R_k \rrbracket
        \end{align*}
        \emph{This is where share conversion happens.}
        Internally, comparison operations perform share conversion to an
        intermediate representation suitable for comparison.
        The specific conversion depends on the underlying MPC protocol:
        it may involve converting arithmetic sharings to Boolean sharings
        (e.g., decomposing integers into shared bits for bitwise comparison),
        or converting between different arithmetic representations
        (e.g., from additive to polynomial sharings).
        The comparison protocol operates on these converted shares and produces
        a single Boolean-shared result bit $\llbracket R_i \rrbracket_{\mathtt{Bool}}$.
        
        The conversion protocol $\Convert_{\mathcal{S}_{\mathtt{Arith}} \to \mathcal{S}_{\mathtt{Bool}}}$
        (or between arithmetic representations)
        from Section~\ref{sec:share-conversion} is invoked internally by these
        comparison instructions.

  \item \textbf{Reconstruction}
        (only used for the final violation flag):
        \begin{align*}
          &\mathtt{REVEAL}\; b, R_i
            &&\text{reconstruct } b \gets \Reconstruct(\llbracket R_i \rrbracket)
        \end{align*}
        Here $b$ is a public output bit visible to all Monitor computers.
        This is the \emph{only} instruction that reveals a secret value.
\end{itemize}

\paragraph{Type discipline.}
Each instruction has a type signature that constrains which sharing types are
permitted for its operands and results.
A sequence of instructions is \emph{well-typed} if every instruction is applied
to registers whose current sharing types match the instruction's requirements.

The specification compiler (which translates the high-level specification
$(\phi, \delta)$ into instruction sequences) must ensure that the generated
instruction sequence is well-typed.
In this protocol, \emph{share conversion occurs exclusively within comparison
operations} ($\mathtt{LT}$, $\mathtt{EQ}$), which internally convert
arithmetic-shared operands to an intermediate form suitable for the comparison
protocol.

This makes the cost of share conversion \emph{explicit and measurable}:
by counting the number of comparison operations, we obtain a precise accounting
of the protocol's conversion overhead.

\paragraph{Example: evaluating $\phi(\mu_t, x_t) = (x_t > \mu_t)$.}
Suppose $\llbracket \mu_t \rrbracket$ is in $R_0 : \mathtt{Arith}$ and
$\llbracket x_t \rrbracket$ is in $R_1 : \mathtt{Arith}$.
The instruction sequence is:
\begin{align*}
  &\mathtt{LT}\; R_2, R_0, R_1
    &&\text{compute } \llbracket \mu_t < x_t \rrbracket \to R_2 : \mathtt{Bool}
    \\
  &\mathtt{REVEAL}\; \textit{flag}, R_2
    &&\text{output the violation flag}
\end{align*}
The $\mathtt{LT}$ instruction performs the necessary share conversions
internally: it converts both $\llbracket \mu_t \rrbracket_{\mathtt{Arith}}$
and $\llbracket x_t \rrbracket_{\mathtt{Arith}}$ to whatever intermediate
representation the comparison protocol requires,
executes the comparison, and outputs a single
Boolean-shared result bit in $R_2$.

\paragraph{Example: state update $\mu_{t+1} = \mu_t + x_t \bmod 256$.}
Suppose the specification maintains a running sum modulo $256$.
Starting with $\llbracket \mu_t \rrbracket$ in $R_0 : \mathtt{Arith}$ and
$\llbracket x_t \rrbracket$ in $R_1 : \mathtt{Arith}$:
\begin{align*}
  &\mathtt{ADD}\; R_3, R_0, R_1
    &&\llbracket \mu_{t+1} \rrbracket \gets \llbracket \mu_t + x_t \rrbracket
\end{align*}
No share conversion is needed since all values are arithmetic-shared.
Register $R_3$ holds $\llbracket \mu_{t+1} \rrbracket$ for the next round.

\paragraph{Example: complex predicate $(x_t > 100) \land (\mu_t = 5)$.}
This requires comparisons and Boolean logic:
\begin{align*}
  &\mathtt{LT}\; R_4, 100, R_1
    &&\llbracket 100 < x_t \rrbracket \to R_4 : \mathtt{Bool}
    \\
  &\mathtt{EQ}\; R_5, R_0, 5
    &&\llbracket \mu_t = 5 \rrbracket \to R_5 : \mathtt{Bool}
    \\
  &\mathtt{AND}\; R_6, R_4, R_5
    &&\llbracket R_4 \land R_5 \rrbracket \to R_6 : \mathtt{Bool}
    \\
  &\mathtt{REVEAL}\; \textit{flag}, R_6
\end{align*}
Each comparison ($\mathtt{LT}$ and $\mathtt{EQ}$) internally performs the
necessary share conversions on its arithmetic-shared operands.
The $\mathtt{AND}$ operation then combines the Boolean results without
any conversion, since both inputs are already Boolean-shared.

\paragraph{Remark on practical implementation.}
Modern MPC frameworks such as MP-SPDZ~\cite{keller2020mp} use a similar
instruction-based model with typed storage and share conversions embedded
within comparison operations.
Our formalization makes this computational substrate explicit within the
protocol description, enabling precise reasoning about the cost and correctness
of specification evaluation.
In particular, by identifying comparison as the sole source of share conversion
cost, we can optimize specifications to minimize the number of comparisons performed
in each round.

\section{Compositionality and security}
\compositionality*
\begin{proof}
By Property~\ref{prop:state-invariant}, the only information revealed in round $t$
is the single bit $\phi(\mu_t, x_t)$.
By Property~\ref{prop:privacy} (privacy of the sharing system),
the shares of $\llbracket \mu_{t+1} \rrbracket$ are independent of $\mu_{t+1}$
when viewed by any strict subset of Monitor computers.
Therefore, the protocol can continue indefinitely without accumulating additional
information leakage beyond the sequence of violation flags.
\end{proof}

\systemprivacy*
\begin{proof}
Fix an adversary structure $(I_M)$ with $|I_M| < k_M$.
At each round $t$, the System sends shares of $x_t$ to the Monitor
computers.
By Property~\ref{prop:privacy} (privacy of the sharing system),
the joint distribution of shares $\{\Share(x_t)_i\}_{i \in I_M}$
received by corrupted Monitor computers is independent of $x_t$.
Formally, for any two possible outputs $x_t, x_t' \in \mathcal{V}$,
the corrupted computers' shares have identical distributions,
and thus reveal no information about which value was actually shared.

The only value reconstructed at each round is the flag $\phi(\mu_t, x_t)$.
This reconstruction reveals one bit of information about $x_t$, namely whether
the pair $(\mu_t, x_t)$ satisfies the violation condition.
No additional information about $\sigma_t$ (which contains $x_t$ and potentially
other private components) is revealed.

Since the Monitor computers perform all intermediate computations
on shared values using secure protocols that preserve
Property~\ref{prop:privacy}, no intermediate results leak
information.
By induction over rounds, the total information revealed to corrupted
Monitor computers is exactly the sequence of flags and nothing more.
\end{proof}
\monitorprivacy*
\begin{proof}[Proof of \cref{thm:monitor-privacy}]
The System entity does not participate in the Monitor's internal computations.
The System's role is limited to (1) sending shares of $x_t$ to Monitor computers,
and (2) receiving the reconstructed flag value $\phi(\mu_t, x_t)$ at the end
of each round.

The shares sent by the System are generated independently by the System itself
and reveal nothing about how the Monitor will process them.
The Monitor never sends shares or intermediate values back to the System.
The only information flow from Monitor to System is the single bit
$\phi(\mu_t, x_t)$, which the System would learn anyway (by definition of the
monitoring task).

Since $\mu_t$ is maintained in shared form by the Monitor entity and never
reconstructed (except implicitly through the flag),
and since corrupted System computers are not part of the Monitor entity,
they receive no shares of $\mu_t$ and learn nothing about it.
Similarly, the functions $\delta$ and $\phi$ are evaluated internally by the
Monitor using secure multiparty computation, and no information about their
structure is revealed to the System beyond the output bit.
\end{proof}
\section{Reactive monitoring template}\label{app:monitor-template}
Our implementation follows a reactive monitoring template that
receives system inputs at each round, updates internal state,
checks for violations, and reports faults.
The following code demonstrates the core structure:

\pagebreak

\begin{lstlisting}[
language=Python,
caption={Template for Reactive Monitoring Systems},
label={lst:monitor-template},
basicstyle=\ttfamily\small,
breaklines=true,
mathescape=true
]
from Compiler.library import *

program.use_edabit(True)

n, T = int(program.args[1]), int(program.args[2])
PORT_NUM = 14000

def spec_function(monitor_state, system_inputs):
    next_state = $\delta$(monitor_state, system_inputs)
    fault = $\varphi$(next_state)
    final_state = fault * monitor_state + (1 - fault) * next_state
    return fault, final_state

monitor_state = ...

listen_for_clients(PORT_NUM)
CLIENT_ID = accept_client_connection(PORT_NUM)

for t in range(T):
    system_inputs = sint.receive_from_client(n, CLIENT_ID)
    fault, monitor_state = spec_function(monitor_state, system_inputs)
    sint.reveal_to_clients([CLIENT_ID], [fault])

closeclientconnection(CLIENT_ID)
\end{lstlisting}

\end{document}